\newtheorem{theorem}{Theorem}
\newtheorem{definition}{Definition}
\newtheorem{corollary}{Corollary}
\newtheorem{lemma}{Lemma}
\newtheorem{proposition}{Proposition}
\theoremstyle{definition}
\newtheorem{remark}{Remark}
\newtheorem{example}{Example}
\DeclareMathOperator*{\argsup}{arg\,sup\,}
\DeclareMathOperator*{\arginf}{arg\,inf\,}
\begin{document}

\title{Semiparametric two-component mixture models when one component is defined through linear constraints}
\author{Diaa~Al~Mohamad \; \; \; Assia~BOUMAHDAF,%
\thanks{Diaa Al Mohamad is a PhD. student at Laboratoire de Statistique Th\'eorique et Appliqu\'ee at the Univeristy of Paris 6 (UPMC) 4 place Jussieu 75005 Paris - France. Diaa Al Mohamad is the corresponding author of the article. email: diaa.almohamad@gmail.com}%
\thanks{Assia BOUMAHDAF is a PhD. student at Laboratoire de Statistique Th\'eorique et Appliqu\'ee at the Univeristy of Paris 6 (UPMC) 4 place Jussieu 75005 Paris - France. Email: assia.boumahdaf@gmail.com}}

%\IEEEspecialpapernotice{A part of this work was presented in the conference paper \cite{AlMohamad2015}}

\maketitle

%\pagestyle{fancy} %Now display headings: headings / fancy / ...
%\rhead{}
%\lhead{\footnotesize{Proximal Point Algorithm for MDE with Application to Mixture Models}}
%\chead{}
%\cfoot{\thepage\ / \pageref{LastPage}}
%\lfoot{\footnotesize{Diaa AL MOHAMAD}}
%\rfoot{\footnotesize{Last update \today}}
\begin{abstract}
We propose a structure of a semiparametric two-component mixture model when one component is parametric and the other is defined through linear constraints on its distribution function. Estimation of a two-component mixture model with an unknown component is very difficult when no particular assumption is made on the structure of the unknown component. A symmetry assumption was used in the literature to simplify the estimation. Such method has the advantage of producing consistent and asymptotically normal estimators, and identifiability of the semiparametric mixture model becomes tractable. Still, existing methods which estimate a semiparametric mixture model have their limits when the parametric component has unknown parameters or the proportion of the parametric part is either very high or very low. We propose in this paper a method to incorporate a prior linear information about the distribution of the unknown component in order to better estimate the model when existing estimation methods fail. The new method is based on $\varphi-$divergences and has an original form since the minimization is carried over both arguments of the divergence. The resulting estimators are proved to be consistent and asymptotically normal under standard assumptions. We show that using the Pearson's $\chi^2$ divergence our algorithm has a linear complexity when the constraints are moment-type. Simulations on univariate and multivariate mixtures demonstrate the viability and the interest of our novel approach.
\end{abstract}
\begin{IEEEkeywords} 
Fenchel duality; identifiability; linear constraint; semiparametric mixture model; signed measures; $\varphi-$divergence.
\end{IEEEkeywords}

\IEEEpeerreviewmaketitle
%--------------------------------------------------------------------------------------
\section*{Introduction}
\IEEEPARstart{A} two-component mixture model with an unknown component is defined by:
\begin{equation}
f(x) = \lambda f_1(x|\theta) + (1-\lambda) f_0(x), \qquad \text{for } x\in\mathbb{R}^r
\label{eqn:GeneralSemiParaMix}
\end{equation}
with $\lambda\in(0,1)$ and $\theta\in\mathbb{R}^d$ to be estimated as the density $f_0$ is unknown. This model appears in the study of gene expression data coming from microarray analysis. An application to two bovine gestation mode comparison is performed in \cite{Bordes06b}. The authors suppose that $\theta$ is known, $f_0$ is symmetric around an unknown $\mu$ and that $r=1$. In \cite{Xiang}, the authors studied a more general setup by considering $\theta$ unknown and applied model (\ref{eqn:GeneralSemiParaMix}) on the Iris data by considering only the first principle component for each observed vector. Another application of model (\ref{eqn:GeneralSemiParaMix}) in genetics can be found in \cite{JunMaDiscret}. The semiparametric model was employed in \cite{Robin} (supposing that $\theta$ is known) for multiple testing procedures in order to estimate the posterior population probabilities and the local false rate discovery. In \cite{Song}, the authors studied a similar setup where $\theta$ is unknown without further assumptions on $f_0$. They applied the semiparametric model in sequential clustering algorithms as a second step after having calculated the centers of the clusters. The model of the current cluster $f_1$ is Gaussian with known location and unknown scale, whereas the distribution of the remaining clusters is represented by $f_0$ and is supposed to be unknown. Finally, Model (\ref{eqn:GeneralSemiParaMix}) can also be regarded as a contamination model, see \cite{Titterington} or \cite{FinMixModMclachlan} for further applications of mixture models. \\

Several estimation methods were proposed in the aforementioned papers. A method was proposed in \cite{Bordes06b} and was later improved in \cite{Bordes10} where the authors assume that the unknown density is symmetric around an unknown value $\mu$ and that the parametric component is fully known, i.e. $\theta$ is known.
 %The idea behind their procedure is to calculate $f_0$ as a function of other terms in equation (\ref{eqn:GeneralSemiParaMix}), then use the symmetry of $f_0$ to write $\mathbb{F}_0(x)=1-\mathbb{F}_0(-x)$, where $\mathbb{F}_0$ is the cumulative distribution function of $f_0$. We then use a suitable distance to compare between $\mathbb{F}_0(x)$ and $1-\mathbb{F}_0(-x)$.
The method as it is cannot be used in multivariate contexts. Besides, in univariate ones it is not possible to use such method when any of the two components of the mixture has a nonnegative support. In comparison to other proposed methods, this method has the advantage of having a solid theoretical basis. The resulting estimators are proved in \cite{Bordes10} to be consistent and asymptotically Gaussian. Besides, the authors prove (see also \cite{Bordes06b}) that the model becomes identifiable under further assumptions on the parametric component. Two other estimation methods were proposed in \cite{Song}; the $\pi-$maximizing method and an EM-type algorithm. The $\pi-$maximizing method is based on the identifiability of model (\ref{eqn:GeneralSemiParaMix}) when $f_1$ is a scale Gaussian model. Asymptotic properties of this method were not studied and theoretical justification is only done in the Gaussian case and it is not clear how to generalize it. The method is still adaptable to multivariate situations. Their EM-type algorithm shares similarities with other exiting approaches in the literature such as \cite{Robin}, \cite{JunMaDiscret} and \cite{BordesStochEM}. These algorithms estimate at each iteration the proportion of the parametric component as an average of weights attributed to the observations. The difference between these methods is in their way of calculating the vector of weights. EM-type methods are not based on the minimization of a criterion function as in \cite{Bordes10} or \cite{Song}. Besides, their asymptotic properties are generally very difficult to establish. Finally, a Hellinger-based two-step directional optimization procedure was proposed in \cite{Xiang}; a first step minimizes the divergence over $f_0$ and the second step minimizes over the parameters $(\lambda,\theta)$. Their method seems to give good results, but the algorithm is very complicated and no explanation on how to do the calculation is given. Properties of the iterative procedure are not studied either.\\ 

All above methods were illustrated to work in specific situations when the parametric component is fully known or is a Gaussian distribution. A comparison between the three methods proposed in \cite{Bordes10} and \cite{Song} is illustrated in \cite{Xiang} and are comparted to the Hellinger-based iterative method of the same authors on a two-component Gaussian data provided that the parametric component is fully known. As we add $\theta$ to the set of unknown parameters, i.e. the parametric component is not fully known, things become different. The symmetry method of \cite{Bordes10} does not perform well unless the proportion of the unknown component $1-\lambda$ is high enough. This is not surprising since this method is based on the properties of the unknown component; hence it should be well estimated. On the contrary, other methods (EM-type methods and the $\pi-$maximizing one) perform well when the proportion of the parametric component $\lambda$ is high enough.\\
It is important and of further interest that the estimation method takes into account possible unknown parameters in the parametric component $f_1$. We believe that the failure of the existing methods to treat model (\ref{eqn:GeneralSemiParaMix}) comes from the degree of difficulty of the semiparametric model itself. The use of a symmetric assumption made the estimation better in some contexts, but such assumption is still restrictive and cannot be applied on positive-supported distributions. We need to incorporate other prior information about $f_0$ in a way that we stay in between a fully parametric settings and a fully semiparametric one. We thus propose a method which permits to add relatively general prior information. Such information needs to apply linearly on the distribution function of the unknown component such as moment-type information. For example, we may have an information relating the first and the second moments of $f_0$ such as $\int{xf_0(x)dx}=\alpha$ and $\int{x^2f_0(x)dx} = m(\alpha)$ for some unknown $\alpha$, see \cite{BroniaKeziou12} and the references therein. Such information adds some structure to the model without precising the value of the moments. More examples will be discussed later on.\\
Unfortunately, the incorporation of linear constraints on the distribution function cannot be done directly in existing methods because the optimization will be carried over a (possibly) infinite dimensional space, and we need a new tool. Convex analysis offers a way using the Fenchel-Legendre duality to transform an optimization problem over an infinite dimensional space into the space of Lagrangian parameters (finite dimensional one). $\varphi-$divergences offer a way by their convexity properties to use this duality result. A complete study of this problem in the nonmixture case is done in \cite{BroniaKeziou12}; see also \cite{AlexisGSI13} and \cite{KeziouThesis} Chap. 3. We will exploit the results in these papers to build upon a new estimation procedure which takes into account linear information over the unknown component's distribution.\\

The paper is organized as follows. Section \ref{sec:GeneralModelDef} presents the general context of semiparametric models under linear constraints. Section \ref{sec:phiDivergenceDef} presents $\varphi-$divergences and some of their general properties. We show how we can estimate a semiparametric model using $\varphi-$divergences. Section \ref{sec:SemiparaMixturesConstr} introduces the semiparametric mixture model (\ref{eqn:GeneralSemiParaMix}) when $f_0$ is defined through linear constraints. We introduce an algorithm which permits to estimate efficiently using $\varphi-$divergences the semiparametric mixture model. Identifiability and existence of a unique solution of the estimation methodology are also discussed. In Section \ref{sec:AsymptotResults}, we prove that our estimator is consistent and asymptotically normal under standard assumptions. Finally, Section \ref{sec:SemiParaSimulations} is devoted to the illustration of the method on several mixture models in univariate and multivarite contexts and a comparison with existing methods which permits to show how the prior information can improve the estimation.

%%%%%%%%%%%%%%%%%%%%%%%%%%%%%%%%%%%%%%%%%%%%%%%%%%%%%%%%%%%%%%%%%%
%%%%%%%%%%%%%%%%%%%%%%%%%%%%%%%%%%%%%%%%%%%%%%%%%%%%%%%%%%%%%%%%%%

\section{Models defined through linear constraints on the distribution function}\label{sec:GeneralModelDef}
We want to integrate linear information in the semiparametric mixture model (\ref{eqn:GeneralSemiParaMix}). The linear information is a set of linear constraints imposed on the unknown component. The objective is still to retrieve the true vector of parameters defining the model on the basis of a given i.i.d. sample $X_1,\cdots,X_n$ drawn from the mixture distribution $P_T$. \\
We prefer to proceed step by step. We start with models which can be defined through a linear information. These models are not necessarily mixtures of distributions in this section. Besides, the constraints or the linear information defining the model will apply over the whole model, i.e. if the model is a mixture then the constraints apply over the whole mixture and not only over one component. We give in this section a brief idea of what the literature offers us to study and estimate such model. In the next section we will proceed to aggregate the two ideas, i.e. mixture models and semiparametric models defined through linear constraints, in order to introduce our semiparametric mixture model where a component is parametric (but not fully known) and a component is defined through linear constraints, hence semiparametric with unknown parameters.
\subsection{Definition and examples}
Denote by $M^{+}$ the set of all probability measures (p.m.) defined on the same measurable space as $P_T$, i.e. $(\mathbb{R}^{r},\mathcal{B}(\mathbb{R}^{r}))$.
\begin{definition}
Let $X_1,\cdots,X_n$ be random variables drawn independently from the probability distribution $P_T$. A semiparametric linear model is a collection of probability measures $\mathcal{M}_{\alpha}(P_T)$, for $\alpha\in\mathcal{A}\subset\mathbb{R}^s$, absolutely continuous with respect to $P_T$ which verifies a set of linear constraints, i.e.
\begin{equation}
\mathcal{M}_{\alpha}(P_T) = \left\{ Q\in M^{+} \; \; \textrm{such that} \;\; Q \ll P_T, \;\; \int g(x)dQ(x) = m(\alpha)\right\},
\label{eqn:SimpleConstraints}
\end{equation}
where $g : \mathbb{R}^{r} \rightarrow \mathbb{R}^{\ell}$ and $m:\mathcal{A}\rightarrow \mathbb{R}^{\ell}$ are specified vector-valued functions.
\end{definition}
\noindent This semiparametric linear model was studied by many authors, see \cite{BroniaKeziou12}, \cite{AlexisGSI13} (with $dP$ replaced by $dP^{-1}$ the quantile measure), \cite{Owen} (in the empirical likelihood context) and \cite{ChenQin} (for finite population problems). It is possible in the above definition to make $g$ depend on the parameter vector $\alpha$, but we stay for the sake of simplicity with the assumption that $g$ does not depend on $\alpha$. The theoretical approach we present in this paper remains valid if $g$ depends on $\alpha$ with slight modifications on the assumptions and more technicalities at the level of the proofs.
\begin{example}
A simple and standard example is a model defined through moment constraints. Let $P_T$ be the Weibull distribution with scale $a^*$ and shape $b^*$. We define $\mathcal{M}_{\alpha}$ with $\alpha = (a,b)\in(0,\infty)^2$ to be the set of all probability measures whose first three moments are given by:
\[\int{x^{i}dQ(x)} = a^{i}\Gamma(1+i/b), \qquad i=1,2,3.\]
The set $\mathcal{M}_{\alpha^*}$ is a "neighborhood" of probability measures of the Weibull distribution $P_T$. It contains all probability measures absolutely continuous with respect to the Weibull mixture $P_T$ and which share the first three moments with it. The union of the sets $\mathcal{M}_{\alpha}$ contains all probability measures whose first three moments share the same analytic form as a Weibull distribution.
\end{example}
If the true distribution $P_T$ verifies the set of $\ell$ constraints (\ref{eqn:SimpleConstraints}) for some $\alpha^*$, then the set 
\begin{equation}
\mathcal{M}_{\alpha^*}(P_T) = \left\lbrace Q\in M^{+} \; \; \textrm{such that} \;\; Q \ll P_T, \;\; \int g(x)dQ(x) = m(\alpha^*) \right\rbrace
\label{eqn:SetMProbaConstr}
\end{equation}
constitutes a "neighborhood" of probability measures of $P_T$. Generally, one would rather consider the larger "neighborhood" defined by
\begin{equation*}
\mathcal{M} = \bigcup_{\alpha \in \mathcal{A}} \mathcal{M}_{\alpha},
\end{equation*}
because the value of $\alpha^*$ is unknown and needs to be estimated. The estimation procedure aims at finding $\alpha^*$ the ("best") vector for which $P_T\in M_{\alpha^*}$. This is generally done by either solving the set of equations (\ref{eqn:SimpleConstraints}) defining the constraints for $Q$ replaced by (an estimate of) $P_T$ or by minimizing a suitable distance-like function between the set $\mathcal{M}$ and (an estimate of) $P_T$. In other words, we search for the "projection" of $P_T$ on $\mathcal{M}$. Solving the set of equations (\ref{eqn:SimpleConstraints}) is in general a difficult task since it is a set of nonlinear equations. In the literature, similar problems were solved using the Fenchel-Legendre duality, see \cite{BroniaKeziou12} and \cite{AlexisGSI13}.\\

In the next section, we present the approach proposed in \cite{BroniaKeziou12} which will be the basis for estimating our semiparametric mixture model. We recall the notion of a $\varphi-$divergence and present the duality result which permits to estimate a semiparametric model efficiently.

%%%%%%%%%%%%%%%%%%%%%%%%%%%%%%%%%%%%%%%%%%%%%%%%%%%%%%%%%%%%%%
%
% ==========================================================================
% ==========================================================================
%
%%%%%%%%%%%%%%%%%%%%%%%%%%%%%%%%%%%%%%%%%%%%%%%%%%%%%%%%%%%%%%
\section{Estimation of a semiparametric model using \texorpdfstring{$\varphi$}{phi}-divergences and the duality technique}\label{sec:phiDivergenceDef}
\subsection{\texorpdfstring{$\varphi$}{phi}-divergences: Definitions and general properties}\label{subsec:DefPropPhiDiverg}
$\varphi$-divergences are measures of dissimilarities. They were introduced independently in \cite{Csiszar1963} (as "$f$-divergences") and \cite{AliSilvey}. Let $P$ be a probability measure and $Q$ be a finite signed measure defined on $(\mathbb{R}^{r}, \mathcal{B}(\mathbb{R}^{r}))$ such that $Q$ is absolutely continuous (a.c.) with respect to (w.r.t.) $P$. Let $\varphi : \mathbb{R} \mapsto [0, +\infty]$ be a proper convex function with $\varphi(1) = 0$ and such that its domain $\textrm{dom}\varphi = \left\lbrace   x \in \mathbb{R} \;\; \textrm{such that} \;\; \varphi(x) < \infty \right\rbrace := (a_{\varphi},b_{\varphi})$ with $a_{\varphi} < 1 < b_{\varphi}$. The $\varphi$-divergence between $Q$ and $P$ is defined by

\begin{equation*}
D_{\varphi}(Q,P) = \int_{\mathbb{R}^r}{ \varphi\left( \dfrac{dQ}{dP}(x) \right)dP(x)},
\end{equation*}

\noindent
where $\dfrac{dQ}{dP}$ is the Radon-Nikodym derivative. When $Q$ is not a.c.w.t. $P$, we set $D_{\varphi}(Q,P) = + \infty$. When, $P = Q$ then $D_{\varphi}(Q,P) = 0$. Furthermore, if the function $x \mapsto \varphi(x)$ is strictly convex on a neighborhood of $x=1$, then

\begin{equation}
\label{fondamental property of divergence}
D_{\varphi}(Q,P) = 0 \;\; \textrm{ if and only if} \;\; P = Q.
\end{equation}

\noindent Several standard statistical divergences can be expressed as $\varphi-$divergences; the Hellinger, the Pearson's and the Neymann's $\chi^2$, and the (modified) Kullback-Leibler. They all belong to the class of Cressie-Read, also known as "power divergences", (see \cite{CressieRead1984}). It is defined by
\begin{equation}
\varphi_{\gamma}(x) := \dfrac{x^{\gamma}-\gamma x + \gamma -1}{\gamma(\gamma -1)},
\label{eqn:CressieReadPhi}
\end{equation}
for $\gamma=\frac{1}{2},2,-2,0,1$ respectively. For $\gamma=0$, we denote $\varphi_0(t)=-\log(t)+t-1$ for the likelihood divergence and when $\gamma=1$, we denote $\varphi_1(t)=t\log(t)-t+1$ for the Kullback-Leibler. More details and properties can be found in \cite{LieseVajda} or \cite{Pardo}. \\
Estimators based on such a tool were developed in the parametric (see \cite{Beran},\cite{LindsayRAF},\cite{ParkBasu},\cite{BroniaKeziou09}) and the semiparametric setups (see \cite{BroniaKeziou12} and \cite{AlexisGSI13}). In all these methods, the $\varphi-$divergence is calculated between a model $Q$ and a true distribution $P_T$. We search our estimators by making the model approach\footnote{More accurately, we project the true distribution on the model.} the true distribution $P_T$. In this paper, we provide an original method where the minimization is done over both arguments of the divergence in a way that the two arguments approach one another for the sake of finding a suitable estimator\footnote{There still exists some work in computer vision using $\varphi-$divergences where the minimization is done over both arguments of the divergence, see \cite{Mireille}.  The work concerns a parametric setup in discrete models.}. In completely nonparametric setup, we may mention the work in \cite{KarunamuniWu} on two component mixture models when both components are unknown. The authors use the Hellinger divergence, and assume that we have in hand a sample from each component and a sample drawn from the whole mixture. For regression of nonparametric mixture models using the Hellinger divergence, see \cite{TangRegression}.\\
The following definitions concern the notion of $\varphi$-projection of finite signed measures over a set of finite signed measures.
\begin{definition}
\label{def:phiDistance}
Let $\mathcal{M}$ be some subset of $M$, the space of finite signed measures. The $\varphi$-divergence between the set $\mathcal{M}$ and some probability measure $P$, noted as $D_{\varphi}(\mathcal{M},P)$, is given by
\begin{equation}
  D_{\varphi}\left(\mathcal{M},P\right)  := \inf_{Q \in \mathcal{M}}D_{\varphi}(Q,P).
	\label{eqn:PhiDistanceEleSet}
\end{equation}
Furthermore, we define the $\varphi-$divergence between a subset of $M$, say $\mathcal{M}$, and a subset of $M^+$ (the set of all probability measures), say $\mathcal{N}$ by
\begin{equation*}
  D_{\varphi}\left(\mathcal{M},\mathcal{N}\right)  := \inf_{Q \in \mathcal{M}}\inf_{P\in\mathcal{N}}D_{\varphi}(Q,P).
\end{equation*}
\end{definition}
\begin{definition}
Assume that $D_{\varphi}(\mathcal{M},P)$ is finite. A measure $Q^* \in \mathcal{M}$ such that
\begin{equation*}
D_{\varphi}(Q^*,P) \leq D_{\varphi}(Q,P), \;\; \textrm{for all}\;\; Q \in \mathcal{M}
\end{equation*}
is called a $\varphi$-projection of $P$ onto $\mathcal{M}$. This projection may not exist, or may not be defined uniquely.
\end{definition}

The essential tool we need from a $\varphi-$divergence is its characterization of the projection of some probability measure $P$ onto a set $\mathcal{M}$ of finite signed measures. Such characterization will permit to transform the search of a projection in an infinite dimensional space to the search of a vector $\xi$ in $\mathbb{R}^{\ell}$.
%%%%%%%%%%%%%%%%%%%%%%%%%%%%%%%%%%%%%%%%%%%%%%%%%%%%%%%%%%%%%%%%%%%%%%%%%%%%
\subsection{Estimation of the semiparametric model using \texorpdfstring{$\varphi-$}{phi}divergences and the duality technique}\label{subsec:DualTechRes}
Estimation of the semiparametric model using $\varphi-$divergences is summarized by the following optimization problem:
\begin{equation}
\alpha^* = \arginf_{\alpha\in\mathcal{A}}\inf_{Q\in\mathcal{M}_{\alpha}} D_{\varphi}(Q,P_T).
\label{eqn:SemiParaEstimGeneral}
\end{equation}
We are, then, searching for the projection of $P_T$ on the set $\mathcal{M}=\cup_{\alpha}\mathcal{M}_{\alpha}$. We are more formally interested in the vector $\alpha^*$ for which the projection of $P_T$ on $\mathcal{M}$ belongs to the set $\mathcal{M}_{\alpha^*}$. Notice that the sets $\mathcal{M}_{\alpha}$ need to be disjoint so that the projection cannot belong to several sets in the same time. \\
The magical property of $\varphi-$divergences stems from their characterization of the projection of a probability measure $P$ onto a set $\mathcal{M}$ of finite signed measures, see \cite{BroniaKeziou2006} Theorem 3.4. Such characterization permits to transform the search of a projection in an infinite dimensional space into the search of a vector $\xi$ in $\mathbb{R}^{\ell}$ through the duality of Fenchel-Legendre and thus simplify the optimization problem (\ref{eqn:SemiParaEstimGeneral}). Note that Theorem 3.4 from \cite{BroniaKeziou2006} provides a formal characterization of the projection, but we will only use it implicitly.\\ 
Let $\varphi$ be a strictly convex function which verifies the same properties mentioned in the definition of a $\varphi-$divergence, see paragraph \ref{subsec:DefPropPhiDiverg}. The Fenchel-Legendre transform
of $\varphi$, say $\psi$ is defined by
\begin{equation*}
\psi(t) = \sup_{x \in \mathbb{R}}\left\lbrace tx - \varphi(x) \right\rbrace, \qquad  \forall t	\in\mathbb{R}.
\end{equation*}
\noindent We are concerned with the convex optimization problem
\begin{equation}
\label{primal problem}
(\mathcal{P}) \qquad \inf_{Q \in \mathcal{M}_{\alpha}} D_{\varphi}(Q,P_T).
\end{equation}
\noindent We associate to $(\mathcal{P})$ the following dual problem
\begin{equation}
\label{dual problem}
(\mathcal{P}^*) \qquad \sup_{\xi\in\mathbb{R}^{\ell}} \xi^t m(\alpha)  - \int{\psi\left(\xi^t g(x)\right) dP_T(x)}.
\end{equation}
\noindent We require that $\varphi$ is differentiable. Assume furthermore that $\int |g_i(x)|dP_T(x) < \infty$ for all $i=1,\ldots, \ell$ and there exists some measure $Q_T$ a.c.w.r.t. $P_T$ such that $D_{\varphi}(Q_T,P_T) < \infty$. According to Proposition 26 from \cite{AlexisGSI13} (see also Proposition 4.2 in \cite{BroniaKeziou12} or Theorem 5.1 in \cite{BroniaKeziouSigned}) we have a strong duality attainment, i.e. $(\mathcal{P}) = (\mathcal{P}^*)$. In other words,
\begin{equation}
\inf_{Q\in\mathcal{M}_{\alpha}} D_{\varphi}\left(Q,P_T\right)=\sup_{\xi\in\mathbb{R}^{\ell}} \xi^t m(\alpha)  - \int{\psi\left(\xi^t g(x)\right) dP_T(x)}.
\label{strong duality}
\end{equation}
The estimation procedure of the semiparametric model (\ref{eqn:SemiParaEstimGeneral}) is now simplified into the following finite-dimensional optimization problem
\begin{eqnarray*}
\alpha^* & = & \arginf_{\alpha\in\mathcal{A}} D_{\varphi}(\mathcal{M}_{\alpha},P_T) \\
  & = & \arginf_{\alpha\in\mathcal{A}} \sup_{\xi\in\mathbb{R}^{\ell}} \xi^t m(\alpha)  - \int{\psi\left(\xi^t g(x)\right) dP_T(x)}.
\end{eqnarray*}
This is indeed a feasible procedure since we only need to optimize a real function over $\mathbb{R}^{\ell}$. Examples of such procedures can be found in \cite{BroniaKeziou12}, \cite{AlexisGSI13}, \cite{NeweySmith} and the references therein. Robustness of this procedure was studied theoretically by \cite{TomaRobustLinConstr}.  \\

Now that all notions and analytical tools are presented, we proceed to the main objective of this chapter; semiparametric mixtures models. The following section defines such models and presents a method to estimate them using $\varphi-$divergences. We study after that the asymptotic properties of the vector of estimates.

%%%%%%%%%%%%%%%%%%%%%%%%%%%%%%%%%%%%%%%%%%%%%%%%%%%%%%%%%%%%%%%%%%%%%%%%%%%%%%
%%%%%%%%%%%%%%%%%%%%%%%%%%%%%%%%%%%%%%%%%%%%%%%%%%%%%%%%%%%%%%%%%%%%%%%%%%%%%%%%%%%%%%%%%%%
%%%%%%%%%%%%%%%%%%%%%%%%%%%%%%%%%%%%%%%%%%%%%%%%%%%%%%%%%%%%%%%%%%%%%%%%%%%%%%%%%%%%%%%%
%
\section{Semiparametric two-component mixture models when one component is defined through linear constraints}\label{sec:SemiparaMixturesConstr}
\subsection{Definition and estimating methodology}
\begin{definition} 
\label{def:SemiParaModel}
Let $X$ be a random variable taking values in $\mathbb{R}^{r}$ distributed from a probability measure $P$. We say that $P(.|\phi)$ with $\phi=(\lambda,\theta,\alpha)$ is a two-component semiparametric mixture model subject to linear constraints if it can be written as follows:
\begin{eqnarray}
P(.| \phi) & = &  \lambda P_1(.|\theta) + (1-\lambda) P_0 \quad \text{s.t. } \nonumber\\
P_0\in\mathcal{M}_{\alpha} & = & \left\{Q \in M \text{ s.t. } \int_{\mathbb{R}^r}{dQ(x)}=1, \int_{\mathbb{R}^r}{g(x)dQ(x)}=m(\alpha) \right\}
\label{eqn:SetMalpha}
\end{eqnarray}
for $\lambda\in(0,1)$ the proportion of the parametric component, $\theta\in\Theta\subset\mathbb{R}^{d}$ a set of parameters defining the parametric component, $\alpha\in\mathcal{A}\subset\mathbb{R}^{s}$ is the constraints parameter vector and finally $m(\alpha)=(m_1(\alpha),\cdots,m_{\ell}(\alpha))$ is a vector-valued function determining the value of the constraints.
\end{definition}
The identifiability of the model was not questioned in the context of Section \ref{sec:GeneralModelDef} because it suffices that the sets $\mathcal{M}_{\alpha}$ are disjoint (the function $m(\alpha)$ is one-to-one). However, in the context of this semiparametric mixture model, identifiability cannot be achieved only by supposing that the sets $\mathcal{M}_{\alpha}$ are disjoint. 
\begin{definition} \label{def:identifiabilitySemipParaMom}
We say that the two-component semiparametric mixture model subject to linear constraints is identifiable if it verifies the following assertion. If
\begin{equation}
\lambda P_1(.|\theta) + (1-\lambda)P_0 = \tilde{\lambda} P_1(.|\tilde{\theta}) + (1-\tilde{\lambda}) \tilde{P}_0,\quad \text{with } P_0\in\mathcal{M}_{\alpha}, \tilde{P}_0\in\mathcal{M}_{\tilde{\alpha}}, 
\label{eqn:IdenitifiabilityDefEq}
\end{equation}
then $\lambda = \tilde{\lambda},\theta = \tilde{\theta}$ and $P_0=\tilde{P}_0$ (and hence $\alpha=\tilde{\alpha}$).
\end{definition}
This is the same identifiability concept considered in \cite{Bordes06b} where the authors exploited their symmetry assumption over $\mathbb{P}_0$ and built a system of moments equations. They proved that if $P_1$ is also symmetric, then equation (\ref{eqn:IdenitifiabilityDefEq}) has two solutions, otherwise it has three solutions. Their idea appears here in a natural way in order to prove the identifiability of our semiparametric mixture model (\ref{eqn:SetMalpha}).
\begin{proposition}
\label{prop:identifiabilityMixture}
For a given mixture distribution $P_T = P(.| \phi^*)$, suppose that the system of equations:
\[\frac{1}{1-\lambda} m^* - \frac{\lambda}{1-\lambda}m_1(\theta) = m_0(\alpha)\]
where $m^*=\int g(x)dP_T(x)$ and $m_1(\theta) = \int g(x) dP_1(x|\theta)$, has a unique solution $(\lambda^*,\theta^*,\alpha^*)$. Then, equation (\ref{eqn:IdenitifiabilityDefEq}) has a unique solution, i.e. $\lambda = \tilde{\lambda},\theta = \tilde{\theta}$ and $P_0=\tilde{P}_0$, and the semiparametric mixture model $P_T = P(.| \phi^*)$ is identifiable.
\end{proposition}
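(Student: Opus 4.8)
The plan is to reduce the functional identity (\ref{eqn:IdenitifiabilityDefEq}) to the finite-dimensional moment system appearing in the hypothesis, so that the assumed uniqueness of its solution forces the two parametrizations to coincide. The key observation is that, because $\mathcal{M}_{\alpha}$ is defined through linear constraints, integrating the identity against the vector-valued function $g$ converts an equation between (signed) measures into an equation between vectors in $\mathbb{R}^{\ell}$, which is exactly the object controlled by the hypothesis.

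First I would solve each representation for its nonparametric component. From the left-hand side of (\ref{eqn:IdenitifiabilityDefEq}) one gets $P_0 = \frac{1}{1-\lambda}P_T - \frac{\lambda}{1-\lambda}P_1(.|\theta)$, and symmetrically $\tilde{P}_0 = \frac{1}{1-\tilde{\lambda}}P_T - \frac{\tilde{\lambda}}{1-\tilde{\lambda}}P_1(.|\tilde{\theta})$; both are well-defined finite signed measures since $\lambda,\tilde{\lambda}\in(0,1)$, and both have total mass $1$ (the coefficients $\frac{1}{1-\lambda}$ and $-\frac{\lambda}{1-\lambda}$ sum to one), consistent with the normalization $\int dP_0 = 1$ built into $\mathcal{M}_{\alpha}$. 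Integrating each against $g$ and using $\int g\,dP_0 = m(\alpha)$ and $\int g\,d\tilde{P}_0 = m(\tilde{\alpha})$, together with $m^* = \int g\,dP_T$ and $m_1(\cdot)=\int g\,dP_1(.|\cdot)$, I would obtain
\[
\frac{1}{1-\lambda}m^* - \frac{\lambda}{1-\lambda}m_1(\theta) = m(\alpha),
\qquad
\frac{1}{1-\tilde{\lambda}}m^* - \frac{\tilde{\lambda}}{1-\tilde{\lambda}}m_1(\tilde{\theta}) = m(\tilde{\alpha}).
\]
Hence both $(\lambda,\theta,\alpha)$ and $(\tilde{\lambda},\tilde{\theta},\tilde{\alpha})$ are solutions of the moment system of the hypothesis. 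The interchange of integration with the linear combination is legitimate under the standing integrability assumption $\int|g_i|\,dP_T<\infty$ recalled in Section \ref{subsec:DualTechRes}.

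By the assumed uniqueness of the solution, $(\lambda,\theta,\alpha)=(\tilde{\lambda},\tilde{\theta},\tilde{\alpha})$; in particular $\lambda=\tilde{\lambda}$ and $\theta=\tilde{\theta}$. Substituting these back into (\ref{eqn:IdenitifiabilityDefEq}) cancels the identical parametric terms $\lambda P_1(.|\theta)=\tilde{\lambda}P_1(.|\tilde{\theta})$, leaving $(1-\lambda)P_0=(1-\lambda)\tilde{P}_0$, whence $P_0=\tilde{P}_0$ since $1-\lambda\neq 0$. This yields every conclusion required by Definition \ref{def:identifiabilitySemipParaMom}.

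I do not expect a genuine technical obstacle in this argument: once the reduction to the moment equation is made, the conclusion is essentially a transcription of its own hypothesis through the projection $Q\mapsto\int g\,dQ$. The only points demanding care are the validity of integrating the signed-measure identity against $g$ and the bookkeeping of the normalization $\int dP_0=1$, which is precisely what makes the displayed coefficients coherent. The substantive effort—deferred to concrete models in the spirit of \cite{Bordes06b}—lies in verifying the uniqueness hypothesis itself, which here plays the same structural role that the moment-system argument plays under the symmetry assumption of Bordes et al.
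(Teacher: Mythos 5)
Your proposal is correct and follows essentially the same route as the paper: both reduce the measure identity (\ref{eqn:IdenitifiabilityDefEq}) to the moment system of the hypothesis by integrating against $g$ and using the constraints $\int g\,dP_0=m(\alpha)$, then invoke the assumed uniqueness to identify the triplets and finally cancel the parametric parts to get $P_0=\tilde{P}_0$. The only cosmetic difference is that you first solve for $P_0$ and $\tilde{P}_0$ before integrating, whereas the paper integrates the mixture identity directly and phrases the uniqueness as injectivity of the map $(\lambda,\theta,\alpha)\mapsto\lambda m_1(\theta)+(1-\lambda)m(\alpha)$ on the fiber over $m^*$.
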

The proof is differed to Appendix \ref{AppendSemiPara:PropIdenitifiability}.
%Example......
\begin{example}[Semiparametric two-component Gaussian mixture]
Suppose that $P_1(.|\theta)$ is a Gaussian model $\mathcal{N}(\mu_1,1)$. Suppose also that the set of constraints is defined as follows:
\[\mathcal{M}_{\mu_0^*} = \left\{f_0 \text{ s.t. } \int{f_0(x)dx}=1,\quad\int_{\mathbb{R}}{xf_0(x)dx} = \mu_0^*,\quad \int_{\mathbb{R}}{x^2f_0(x)dx} = 1+{\mu_0^*}^2\right\}.\]
We would like to study the identifiability of the two-component semiparametric Gaussian mixture whose unknown component $P_0$ shares the first two moments with the Gaussian distribution $\mathcal{N}(\mu_0^*,1)$ for a known $\mu_0^*$. Using Proposition \ref{prop:identifiabilityMixture}, it suffices to study the system of equations 
\begin{eqnarray*}
\frac{1}{1-\lambda} \int{xf(x|\mu_1^*,\mu_0^*,\lambda^*)dx} - \frac{\lambda}{1-\lambda} \int{xf_1(x|\mu_1)dx} & = & \mu_0^* \\
\frac{1}{1-\lambda} \int{x^2f(x|\mu_1^*,\mu_0^*,\lambda^*)dx} - \frac{\lambda}{1-\lambda} \int{x^2f_1(x|\mu_1)dx} & = & 1+{\mu_0^*}^2.
\end{eqnarray*}
Recall that $\int{xf(x|\mu_1^*,\mu_0^*,\lambda^*)dx} = \lambda^*\mu_1^*+(1-\lambda^*)\mu_0^*$ and $\int{x^2f(x|\mu_1^*,\mu_0^*,\lambda^*)dx}=1+\lambda^*{\mu_1^*}^2+(1-\lambda^*){\mu_0^*}^2$. The first equation in the previous system entails that:
\begin{equation}
\lambda\mu_1 - \lambda\mu_0^* = \lambda^*\mu_1^* -\lambda^*\mu_0^*.
\label{eqn:GaussExampleCond1}
\end{equation}
The second equation gives:
\begin{equation}
\lambda^*(1+{\mu_1^*}^2)-\lambda^*(1+{\mu_0^*}^2) = \lambda\mu_1^2 - \lambda{\mu_0^*}^2
\label{eqn:GaussExampleCond2}
\end{equation}
The nonlinear system of equations (\ref{eqn:GaussExampleCond1}, \ref{eqn:GaussExampleCond2}) has a solution for $\mu_1=\mu_1^*, \lambda=\lambda^*$. Suppose by contradiction that $\mu_1\neq\mu_1^*$ and check if there are other solutions. The system (\ref{eqn:GaussExampleCond1}, \ref{eqn:GaussExampleCond2}) implies:
\[
\lambda = \frac{\lambda^*(1+{\mu_1^*}^2)-\lambda^*(1+{\mu_0^*}^2)}{(\mu_1-\mu_0^*)(\mu_1+\mu_0^*)} = \frac{\lambda^*\mu_1^* - \lambda^*\mu_0^*}{\mu_1-\mu_0^*}
\]
This entails that 
\[\mu_1 + \mu_0^* = \frac{\lambda^*\left[(1+{\mu_1^*}^2)-(1+{\mu_0^*}^2)\right]}{\lambda^*\left[\mu_1^* - \mu_0^*\right]}=\mu_1^*+\mu_0^*\] 
Hence, $\mu_1=\mu_1^*$ which contradicts what we have assumed. Thus $\mu_1=\mu_1^*, \lambda=\lambda^*$ is the only solution. We conclude that if $\mu_0^*$ is known and that we impose two moments constraints over $f_0$, then the semiparametric two-component Gaussian mixture model is identifiable.\\
Notice that imposing only one condition on the first moment is not sufficient since any value of $\lambda\in(0,1)$ would produce a corresponding solution for $\mu_1$ in equation (\ref{eqn:GaussExampleCond1}). We therefore are in need for the second constraint. Notice also that if $\lambda=\lambda^*$, then $\mu_1=\mu_1^*$. This means that, by continuity of the equation over $(\lambda,\mu_1)$, if $\lambda$ is initialized in a close neighborhood of $\lambda^*$, then $\mu_1$ would be estimated near $\mu_1^*$. This may represent a remedy if we could not impose but one moment constraint.

\end{example} 
%%%%%%%%%%%%%%%%%%%%%%%%%%%%%%%%%%%%%%%%%%%%%%%%%%%%%%%%%%%%%%%%%%%%%%%%%%%%%%%
%%%%%%%%%%%%%%%%%%%%%%%%%%%%%%%%%%%%%%%%%%%%%%%%%%%%%%%%%%%%%%%%%%%%%%%%%%%%%%
\subsection{An algorithm for the estimation of the semiparametric mixture model}\label{subsec:procIntrod}
We have seen in paragraph \ref{subsec:DualTechRes} that it is possible to use $\varphi-$divergences to estimate a semiparametric model as long as the constraints apply over $P(.|\phi)$, i.e. the whole mixture. In our case, the constraints apply only on a component of the mixture. It is thus reasonable to consider a "model" expressed through $P_0$ instead of $P$. We have:
\[P_0 = \frac{1}{1-\lambda} P(.|\phi) - \frac{\lambda}{1-\lambda} P_1(.|\theta).\]
Denote $P_T=P(.|\phi^*)$ with $\phi^*=(\lambda^*,\theta^*,\alpha^*)$ the distribution which generates the observed data. Denote also $P_0^*$ to the true semiparametric component of the mixture $P_T$. The only information we hold about $P_0^*$ is that it belongs to a set $\mathcal{M}_{\alpha^*}$ for some (possibly unknown) $\alpha^*\in\mathcal{A}$. Besides, it verifies:
\begin{equation}
P_0^* = \frac{1}{1-\lambda^*} P_T - \frac{\lambda^*}{1-\lambda^*} P_1(.|\theta^*).
\label{eqn:TrueP0model}
\end{equation}
We would like to retrieve the value of the vector $\phi^*=(\lambda^*,\theta^*,\alpha^*)$ provided a sample $X_1,\cdots,X_n$ drawn from $P_T$ and that $P_0^*\in\cup_{\alpha}\mathcal{M}_{\alpha}$. Consider the set of probability measures:
\begin{equation}
\mathcal{N} = \left\{Q \in M^+, \; \text{s.t.} \;Q=\frac{1}{1-\lambda} P_T - \frac{\lambda}{1-\lambda} P_1(.|\theta), \quad \lambda\in(0,1),\theta\in\Theta\right\}.
\label{eqn:SetNnewModel}
\end{equation}
Notice that $P_0^*$ belongs to this set for $\lambda=\lambda^*$ and $\theta=\theta^*$. On the other hand, $P_0^*$ is supposed, for simplicity, to belong to the union $\cup_{\alpha\in\mathcal{A}}\mathcal{M}_{\alpha}$. We may now write,
\[P_0^*\in \mathcal{N} \bigcap \cup_{\alpha\in\mathcal{A}}\mathcal{M}_{\alpha}.\]
If we suppose now that the intersection $\mathcal{N} \bigcap \cup_{\alpha\in\mathcal{A}}\mathcal{M}_{\alpha}$ contains only one element which would be a fortiori $P_0^*$, then it is very reasonable to consider an estimation procedure by calculating some "distance" between the two sets $\mathcal{N}$ and $\cup_{\alpha\in\mathcal{A}}\mathcal{M}_{\alpha}$. Such distance can be measured using a $\varphi-$divergence by (see Definition \ref{def:phiDistance}):
\begin{equation}
D_{\varphi}(\mathcal{M},\mathcal{N}) = \inf_{Q\in\mathcal{N}}\inf_{P_0\in\mathcal{M}}D_{\varphi}(P_0,Q).
\label{eqn:DistanceTwoMeasureSets}
\end{equation}
We may reparametrize this distance using the definition of $\mathcal{N}$. Indeed,
\begin{eqnarray}
D_{\varphi}(\cup_{\alpha}\mathcal{M}_{\alpha},\mathcal{N}) & = & \inf_{Q\in\mathcal{N}}\inf_{P_0\in\cup_{\alpha}\mathcal{M}_{\alpha}}D_{\varphi}(P_0,Q) \nonumber\\
 & = & \inf_{\lambda,\theta}\inf_{\alpha, P_0\in\mathcal{M}_{\alpha}}D_{\varphi}\left(P_0,\frac{1}{1-\lambda} P_T - \frac{\lambda}{1-\lambda} P_1(.|\theta)\right).
\label{eqn:DistanceModelConstraint}
\end{eqnarray}
Here, the optimization is carried over the couples $(\lambda,\theta)$ for which the measure $\frac{1}{1-\lambda} P_T - \frac{\lambda}{1-\lambda} P_1(.|\theta)$ is positive (so it is a probability measure). Define the set
\[\Phi^{++} = \left\{(\lambda,\theta,\alpha)\in\Phi, \; \frac{1}{1-\lambda} P_T - \frac{\lambda}{1-\lambda} P_1(.|\theta)\; \text{ is a probability measure}\right\}.\]
If we still have $P_0^*$ as the only (probability) measure which belongs to both $\mathcal{N}$ and $\cup_{\alpha}\mathcal{M}_{\alpha}$, then the argument of the infimum in (\ref{eqn:DistanceModelConstraint}) is none other than $(\lambda^*,\theta^*,\alpha^*)$, i.e.
\begin{equation}
(\lambda^*,\theta^*,\alpha^*) = \arginf_{\lambda,\theta,\alpha\in\Phi^{++}}\inf_{P_0\in\mathcal{M}_{\alpha}}D_{\varphi}\left(P_0,\frac{1}{1-\lambda} P(.|\phi^*) - \frac{\lambda}{1-\lambda} P_1(.|\theta)\right).
\label{eqn:MomentTrueEstimProc}
\end{equation}
It is important to notice that if $P_0^*\notin\cup\mathcal{M}_{\alpha}$, then the procedure still makes sense. Indeed, we are searching for the best measure of the form $\frac{1}{1-\lambda}P_T - \frac{\lambda}{1-\lambda}P_1(.|\theta)$ which verifies the constraints from a point of view of the $\varphi-$divergence.
%%%%%%%%%%%%%%%%%%%%%%%%%%%%%%%%%%%%%%%%%%%%%%%%%%%%%%%%%%%%%%%%%%%%%%%%%%%%%%%%%%%%%%
%%%%%%%%%%%%%%%%%%%%%%%%%%%%%%%%%%%%%%%%%%%%%%%%%%%%%%%%%%%%%%%%%%%%%%%%%%%%%%%%%%%%%%
\subsection{The algorithm in practice : Estimation using the duality technique and plug-in estimate} %% continue
The Fenchel-Legendre duality permits to transform the problem of minimizing under linear constraints in a possibly infinite dimensional space into an unconstrained optimization problem in the space of Lagrangian parameters over $\mathbb{R}^{\ell+1}$, where $\ell+1$ is the number of constraints. We will apply the duality result presenter earlier in paragraph \ref{subsec:DualTechRes} on the inner optimization in equation (\ref{eqn:DistanceModelConstraint}). Redefine the function $m$ (see \ref{eqn:SetMalpha}) as $m(\alpha)=(m_0(\alpha),m_1(\alpha),\cdots,m_{\ell}(\alpha))$ where $m_0(\alpha)=1$. We have:
\begin{eqnarray*}
\inf_{Q\in\mathcal{M}_{\alpha}} D_{\varphi}\left(Q,\frac{1}{\lambda-1} P_T - \frac{\lambda}{1-\lambda} P_1(.|\theta)\right) & = & \sup_{\xi\in\mathbb{R}^{l+1}} \xi^t m(\alpha)  - \frac{1}{1-\lambda}\int{\psi\left(\xi^t g(x)\right) \left(dP_T(x) - \lambda dP_1(x|\theta)\right)}\\
 & = & \sup_{\xi\in\mathbb{R}^{l+1}} \xi^t m(\alpha)  - \frac{1}{1-\lambda}\int{\psi\left(\xi^t g(x)\right) dP_T(x)} \\
	&  & + \frac{\lambda}{1-\lambda} \int{\psi\left(\xi^t g(x)\right) dP_1(x|\theta)}.
\end{eqnarray*}
Inserting this result in (\ref{eqn:MomentTrueEstimProc}) gives that:
\begin{eqnarray*}
(\lambda^*,\theta^*,\alpha^*)  & = &  \arginf_{\phi\in\Phi^{++}}\inf_{Q\in\mathcal{M}_{\alpha}} D_{\varphi}\left(Q,\frac{1}{\lambda-1} P_T - \frac{\lambda}{1-\lambda} P_1(.|\theta)\right) \\
& = & \arginf_{\phi\in\Phi^{++}}\sup_{\xi\in\mathbb{R}^{l+1}} \xi^t m(\alpha)  - \frac{1}{1-\lambda}\int{\psi\left(\xi^t g(x)\right) dP_T(x)} \\
	&  & + \frac{\lambda}{1-\lambda} \int{\psi\left(\xi^t g(x)\right) dP_1(x|\theta)}.
\end{eqnarray*}
The right hand side can be estimated on the basis of an $n-$sample drawn from $P_T$, say $X_1,\cdots,X_n$, by a simple plug-in of the empirical measure $P_n$. Before we proceed to this final step, it is interesting to notice that the characterization of the set $\Phi^++$ is difficult. We will show that we can carry out the optimization over the whole set $\Phi$. Let $H(\phi,\xi)$ be objective function in the previous optimization problem. In other words
\[H(\phi,\xi) = \xi^t m(\alpha)  - \frac{1}{1-\lambda}\int{\psi\left(\xi^t g(x)\right) dP_T(x)} + \frac{\lambda}{1-\lambda} \int{\psi\left(\xi^t g(x)\right) dP_1(x|\theta)}.\]
Define first the set 
\[\Phi^+ = \left\{\phi\in\Phi \text{ s.t. } \xi\mapsto H(\phi,\xi) \text{ is strictly concave}\right\}\]
Notice that for any $\phi\in\Phi\setminus\Phi^+$, we have $\sup_\xi H(\phi,\xi)=\infty$, thus these vectors do not participate in the calculus of the infimum over $\phi$ afterwords. Let us concentrate now on the vectors inside the set $\Phi^+$. Let $\xi(\phi)$ be the supremum of $H(\phi,\xi)$ over $\xi$. This supremum exists and is uniquely defined because $\xi\mapsto H(\phi,\xi)$ is strictly concave. Notice that for any $\phi\in\Phi^+$, we have:
\begin{eqnarray*}
H(\phi,\xi(\phi)) & = & \sup_{\xi} H(\phi,\xi) \\
   & \geq & H(\phi,0) \\
	& \geq & 0.
\end{eqnarray*}
The last line is due to the fact that $\psi(0)=0$.\\
On the other hand, we easily have $\phi^* \in \Phi^{++}\subset\Phi^+$. Moreover, at $\phi^*$ we have a duality attainment in the sens that:
\begin{eqnarray*}
\inf_{P_0\in\mathcal{M}_{\alpha^T}}D_{\varphi}\left(P_0,\frac{1}{1-\lambda^T}P_T - \frac{\lambda^T}{1-\lambda^T}P_1(.|\theta^T)\right) & = & \sup_{\xi} H(\phi^T,\xi)\\
0 & = & H(\phi^T,\xi(\phi^T)).
\end{eqnarray*}
The final line comes from our hypothesis that the probability measure $P_0^T = \frac{1}{1-\lambda^T}P_T - \frac{\lambda^T}{1-\lambda^T}P_1(.|\theta^T)$ verifies the constraints for $\alpha=\alpha^t$, i.e. $P_0^T\in\mathcal{M}_{\alpha^*}$. We thus have proved that $\phi^*$ is a global infimum of $\phi\mapsto H(\phi,\xi(\phi))$ over $\Phi^+$. Since $H(\phi,\xi)=\infty$ for any $\phi\in\Phi\setminus\Phi^+$, we can write that:
\begin{eqnarray}
\phi^* & = & \inf_{\phi\in\Phi}\sup_{\xi\in\mathbb{R}^{\ell+1}} H(\phi,\xi) \nonumber\\
 & = & \inf_{\phi\in\Phi}\sup_{\xi\in\mathbb{R}^{\ell+1}}  \xi^t m(\alpha)  - \frac{1}{1-\lambda}\int{\psi\left(\xi^t g(x)\right) dP_T(x)} \nonumber\\
	&  & + \frac{\lambda}{1-\lambda} \int{\psi\left(\xi^t g(x)\right) dP_1(x|\theta)}.
\label{eqn:MomentExactEstimProc}
\end{eqnarray}
We may now state our estimator by a simple plug-in of the empirical measure in the previous display. 
\begin{eqnarray}
(\hat{\lambda}, \hat{\theta},\hat{\alpha}) & = & \arginf_{(\lambda, \theta,\alpha)\in\Phi} \sup_{\xi\in\mathbb{R}^{l+1}} \xi^t m(\alpha)  - \frac{1}{1-\lambda}\frac{1}{n}\sum_{i=1}^n{\psi\left(\xi^t g(X_i)\right)} \nonumber\\
 &  & \qquad \qquad \qquad + \frac{\lambda}{1-\lambda} \int{\psi\left(\xi^t g(x)\right) dP_1(x|\theta)}.
\label{eqn:MomentEstimProc}
\end{eqnarray}
This is a feasible procedure in the sens that we only need the data, the set of constraints, the model of the parametric component and finally an optimization algorithm (CG, BFGS, SANA, Nelder-Mead,etc) in order to produce an estimate of the true vector of parameters $\phi^*$.
\begin{remark}
It is straightforward that the set $\Phi^+$ of effective parameters in the optimization problem (\ref{eqn:MomentExactEstimProc}) is not empty since it contains $\phi^*$. For the case of the optimization problem (\ref{eqn:MomentEstimProc}), the effective set of parameters $\Phi_n^+$ defined by:
\[\Phi_n^+ = \{\phi\in\Phi, \text{ s.t. }\; H_n(\phi,\xi) \text{ is strictly concave}\},\]
where:
\[H_n(\phi,\xi) = \xi^t m(\alpha)  - \frac{1}{1-\lambda}\frac{1}{n}\sum_{i=1}^n{\psi\left(\xi^t g(X_i)\right)} + \frac{\lambda}{1-\lambda} \int{\psi\left(\xi^t g(x)\right) dP_1(x|\theta)},\]
does not necessarily contain $\phi^*$ for a finite $n$. It is important to show that it is not void. Indeed, define the set $\Phi_n^{++}$ similarly to $\Phi^{++}$
\[\Phi_n^{++} = \{\phi\in\Phi, \text{ s.t. } \frac{1}{1-\lambda}P_T - \frac{\lambda}{1-\lambda}P_1(.|\theta)\text{ is a probability measure}\}.\]
Clearly $\Phi_n^{++}\subset\Phi_n^+$. Besides, for a fixed $\theta$, as $\lambda$ approaches from zero, the influence of $P_1$ disappears since $\frac{\lambda}{1-\lambda}$ becomes aslo close to zero. Thus the measure $\frac{1}{1-\lambda}P_T - \frac{\lambda}{1-\lambda}P_1(.|\theta)$ becomes automatically positive and thus a probability measure. \\
In the general case scenario when $\lambda$ is supposed to be lower-bounded far from zero, the previous argument may not hold, and a more profound study of the objective function $H(\phi,\xi)$ is needed. In case, $\xi\mapsto H(\phi,\xi)$ is differentiable, we can use Sylvester's rule to check if the Hessien matrix is definite negative.
\end{remark}
%%%%%%%%%% Example
\begin{example}[Chi square] 
\label{ex:Chi2LinConstr}
Consider the $\chi^2$ divergence for which $\varphi(t)=(t-1)^2/2$. The Convex conjugate of $\varphi$ is given by $\psi(t)=t^2/2+t$. For $(\lambda,\theta,\alpha)\in\Phi$, we have:
\begin{multline*}
\inf_{Q\in\mathcal{M}_{\alpha}} D_{\varphi}\left(Q,\frac{1}{\lambda-1} P_T - \frac{\lambda}{1-\lambda} P_1(.|\theta)\right)
 =  \sup_{\xi\in\mathbb{R}^{l+1}} \xi^t m(\alpha)  - \frac{1}{1-\lambda}\int{\left[\frac{1}{2}\left(\xi^t g(x)\right)^2+\xi^t g(x)\right] dP_T(x)} \\
	+ \frac{\lambda}{1-\lambda} \int{\left[\frac{1}{2}\left(\xi^t g(x)\right)^2+\xi^t g(x)\right] dP_1(x|\theta)}.
\end{multline*}
It is interesting to note that the supremum over $\xi$ can be calculated explicitly. Clearly, the optimized function is a polynomial of $\xi$ and thus infinitely differentiable. The Hessian matrix is equal to $-\Omega$ where:
\begin{equation}
\Omega = \int{g(x)g(x)^t\left(\frac{1}{1-\lambda}dP(x)-\frac{\lambda}{1-\lambda}dP_1(x|\theta)\right)}. \\
\label{eqn:OmegaMat}
\end{equation}
If the measure $\frac{1}{1-\lambda}dP-\frac{\lambda}{1-\lambda}dP_1(.|\theta)$ is positive, then $\Omega$ is symmetric definite positive (s.d.p) and the Hessian matrix is symmetric definite negative. Consequently, the supremum over $\xi$ is ensured to exist. If it is a signed measure, then the supremum might be infinity. We may now write:
\[
\xi(\phi) = \Omega^{-1}\left(m(\alpha) - \int{g(x)\left(\frac{1}{1-\lambda}dP(x)-\frac{\lambda}{1-\lambda}dP_1(x|\theta)\right)}\right), \quad \text{if } \Omega \text{ is s.d.p}
\]
For the empirical criterion, we define similarly $\Omega_n$ by
\begin{equation}
\Omega_n = \frac{1}{n}\frac{1}{1-\lambda}\sum_{i=1}^n{g(X_i)g(X_i)^t}-\frac{\lambda}{1-\lambda}\int{g(x)g(x)^tdP_1(x|\theta)}. \\
\label{eqn:OmeganMat}
\end{equation}
The solution to the corresponding supremum over $\xi$ is given by
\[
\xi_n(\phi) = \Omega_n^{-1}\left(m(\alpha) - \frac{1}{n}\frac{1}{1-\lambda}\sum_{i=1}^n{g(X_i)}+\frac{\lambda}{1-\lambda}\int{g(x)dP_1(x|\theta)}\right), \quad \text{if } \Omega_n \text{ is s.d.p}\]
\end{example}
%%%%%%%%%%%%%%%%%%%%%%%%%%%%%%%%%%%%%%%%%%%%%%%%%%%%%%%%%%%%%%%%%%%
%%%%%%%%%%%%%%%%%%%%%%%%%%%%%%%%%%%%%%%%%%%%%%%%%%%%%%%%%%%%%%%%%%%
\subsection{Uniqueness of the solution "under the model"}
By a unique solution we mean that only one measure, which can be written in the form of $\frac{1}{1-\lambda}P_T-\frac{\lambda}{1-\lambda}P_1(.|\theta)$, verifies the constraints with a unique triplet $(\lambda^*,\theta^*,\alpha^*)$. The existence of a unique solution is essential in order to ensure that the procedure (\ref{eqn:MomentTrueEstimProc}) is a reasonable estimation method. We provide next a result ensuring the uniqueness of the solution. The idea is based on the identification of the intersection between the set $\mathcal{N}$ (see \ref{eqn:SetNnewModel}) and the constraints $\mathcal{M}$. The proof is differed to Appendix \ref{AppendSemiPara:Prop1}.
\begin{proposition}
\label{prop:identifiability}
Assume that $P_0^*\in\mathcal{M}=\cup_{\alpha}\mathcal{M}_{\alpha}$. Suppose also that:
\begin{enumerate}
\item the system of equations:
\begin{equation}
\int{g_i(x)\left(dP(x|\phi^*) - \lambda dP_1(x|\theta)\right)} = (1-\lambda)m_i(\alpha), \qquad  i=1,\cdots,\ell
\label{eqn:NlnSys}
\end{equation}
has a unique solution $(\lambda^*,\theta^*,\alpha^*)$;
\item the function $\alpha\mapsto m(\alpha)$ is one-to-one;
\item for any $\theta\in\Theta$ we have :
\[\lim_{\|x\|\rightarrow \infty} \frac{dP_1(x|\theta)}{dP_T(x)} = c,\quad \text{ with } c\in [0,\infty)\setminus\{1\};\]
\item the parametric component is identifiable, i.e. if $P_1(.|\theta) = P_1(.|\theta')\;\; dP_T-$a.e. then $\theta=\theta'$,
\end{enumerate}
then, the intersection $\mathcal{N}\cap\mathcal{M}$ contains a unique measure $P_0^*$, and there exists a unique vector $(\lambda^*,\theta^*,\alpha^*)$ such that $P_T = \lambda^*P_1(.|\theta^*)+(1-\lambda)P_0^*$ where $P_0^*$ is given by (\ref{eqn:TrueP0model}) and belongs to $\mathcal{M}_{\alpha^*}$.
\end{proposition}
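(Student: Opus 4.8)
The plan is to reduce the set-theoretic statement about $\mathcal{N}\cap\mathcal{M}$ to the finite-dimensional moment system (\ref{eqn:NlnSys}), and then to read off uniqueness from the four assumptions. Existence costs nothing: by (\ref{eqn:TrueP0model}) the measure $P_0^*$ equals $\frac{1}{1-\lambda^*}P_T-\frac{\lambda^*}{1-\lambda^*}P_1(.|\theta^*)$ and is, by hypothesis, a probability measure, so $P_0^*\in\mathcal{N}$; and $P_0^*\in\mathcal{M}_{\alpha^*}\subset\mathcal{M}$ by assumption. Thus $P_0^*\in\mathcal{N}\cap\mathcal{M}$, and the whole content of the proposition is the two uniqueness claims.

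The technical heart, which I would isolate as a preliminary lemma, is that the parametrization $(\lambda,\theta)\mapsto Q_{\lambda,\theta}:=\frac{1}{1-\lambda}P_T-\frac{\lambda}{1-\lambda}P_1(.|\theta)$ of $\mathcal{N}$ is one-to-one; this is exactly where the tail condition (assumption 3) and the identifiability of the parametric family (assumption 4) enter. Suppose $Q_{\lambda_1,\theta_1}=Q_{\lambda_2,\theta_2}$. Writing the equality between the corresponding densities with respect to a common dominating measure, isolating the $P_T$ terms on one side, and dividing by $dP_T$, I would let $\|x\|\to\infty$: assumption 3 sends both ratios $dP_1(.|\theta_i)/dP_T$ to the common constant $c$, and, after using the identity $\frac{\lambda}{1-\lambda}=\frac{1}{1-\lambda}-1$, the limiting scalar relation factors as $(1-c)\left(\frac{1}{1-\lambda_1}-\frac{1}{1-\lambda_2}\right)=0$. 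Because $c\neq1$ this forces $\lambda_1=\lambda_2$; substituting back and using $\lambda\in(0,1)$ leaves $P_1(.|\theta_1)=P_1(.|\theta_2)$ $P_T$-a.e., whence $\theta_1=\theta_2$ by assumption 4.

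With injectivity in hand, I would translate membership in the intersection into (\ref{eqn:NlnSys}). If $Q\in\mathcal{N}\cap\mathcal{M}$, then $Q=Q_{\lambda,\theta}$ for a now-unique pair $(\lambda,\theta)$, and $\int g\,dQ=m(\alpha)$ for some $\alpha$ (the normalization $\int dQ=1$ being automatic since $Q\in M^+$). Inserting the expression for $Q_{\lambda,\theta}$ into the moment constraint and clearing the factor $1-\lambda$ gives precisely (\ref{eqn:NlnSys}) for $(\lambda,\theta,\alpha)$; assumption 1 then pins this triple to $(\lambda^*,\theta^*,\alpha^*)$, so $Q=Q_{\lambda^*,\theta^*}=P_0^*$, which proves $\mathcal{N}\cap\mathcal{M}=\{P_0^*\}$. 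The uniqueness of the decomposition parameters follows in the same way: any triple realizing $P_T=\lambda P_1(.|\theta)+(1-\lambda)P_0^*$ with $P_0^*\in\mathcal{M}_\alpha$ yields $P_0^*=Q_{\lambda,\theta}$ (so $(\lambda,\theta)$ is determined by injectivity) and $m(\alpha)=\int g\,dP_0^*$ (so $\alpha$ is determined by the one-to-one assumption 2 on $m$), and once more $(\lambda,\theta,\alpha)$ solves (\ref{eqn:NlnSys}).

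I expect the tail argument for injectivity to be the main obstacle. The passage to the limit $\|x\|\to\infty$ in the density identity has to be justified as an a.e. statement, and one needs the two parametric tail ratios to converge simultaneously to the same $c$; the exclusion of the degenerate value $c=1$ is precisely what makes the factorization nontrivial and lets us separate the contribution of $P_T$ from that of the parametric component. Everything after injectivity---reducing the intersection to (\ref{eqn:NlnSys}) and invoking assumptions 1 and 2---is routine bookkeeping.
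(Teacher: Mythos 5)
Your proposal is correct and follows essentially the same route as the paper's proof: injectivity of $(\lambda,\theta)\mapsto \frac{1}{1-\lambda}P_T-\frac{\lambda}{1-\lambda}P_1(.|\theta)$ via the tail limit (your explicit factorization $(1-c)\bigl(\frac{1}{1-\lambda_1}-\frac{1}{1-\lambda_2}\bigr)=0$ is just a cleaner rendering of the paper's observation that $z\mapsto(1-cz)/(1-z)$ is strictly monotone for $c\neq1$) and assumption 4, followed by reduction of membership in $\mathcal{N}\cap\mathcal{M}$ to the system (\ref{eqn:NlnSys}) and an appeal to assumptions 1 and 2. No gaps.
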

There is no general result for a non linear system of equations to have a unique solution; still, it is necessary to ensure that $\ell\geq d+s+1$, otherwise there would be an infinite number of signed measures in the intersection $\mathcal{N} \bigcap \cup_{\alpha\in\mathcal{A}}\mathcal{M}_{\alpha}$.
\begin{remark}
Assumptions 3 and 4 of Proposition \ref{prop:identifiability} are used to prove the identifiability of the "model" $\left(\frac{1}{1-\lambda}P_T - \frac{\lambda}{1-\lambda}P_1(.|\theta)\right)_{\lambda,\theta}$. Thus, according to the considered situation we may find simpler ones for particular cases (or even for the general case). Our assumptions remain sufficient but not necessary for the proof. Note also that similar assumption to 3 can be found in the literature on semiparametric mixture models, see Proposition 3 in \cite{Bordes06b}. We can imagine the case of a semiparametric mixture model with a Gaussian parametric component with unknown mean. Besides, the unknown component could have a heavier tail such as an exponential model. Assumption 3 is then fulfilled with $c=0$.
\end{remark}
%%%%%%%%%%%%
% Example
\begin{example}
One of the most popular models in clustering is the Gaussian multivariate mixture (GMM). Suppose that we have two classes. Linear discriminant analysis (LDA) is based on the hypothesis that the covariance matrix of the two classes is the same. Let $X$ be a random variable which takes its values in $\mathbb{R}^2$ and is drawn from a mixture model of two components. In the context of LDA, the model has the form:
\[f(x,y|\lambda,\mu_1,\mu_2,\Sigma) = \lambda f_1(x,y|\mu_1,\Sigma) + (1-\lambda)f_1(x,y|\mu_2,\Sigma),\]
with:
\[f_1(x,y|\mu_1,\Sigma)=\frac{1}{2\pi\sqrt{|\text{det}(\Sigma)|}}\exp\left[-\frac{1}{2}((x,y)^t-\mu_1)^t\Sigma((x,y)^t-\mu_1)\right], \quad \Sigma = \left(\begin{array}{cc}\sigma^2 & \rho \\ \rho & \sigma^2\end{array}\right).\]
We would like to relax the assumption over the second component by keeping the fact that the covariance matrix is the same as the one of the first component. We will start by imposing the very natural constraints on the second component. 
\begin{eqnarray*}
\int{xf_0(x,y)dxdy} & = & \mu_{2,1}, \\
\int{yf_0(x,y)dxdy} & = & \mu_{2,2},\\ 
\int{x^2f_0(x,y)dxdy} & = & \sigma^2, \\
\int{y^2f_0(x,y)dxdy} & = & \sigma^2, \\
\int{xyf_0(x,y)dxdy} & = & \rho + \mu_{2,1}\mu_{2,2} - \mu_{2,1}^2 - \mu_{2,2}^2.
\end{eqnarray*}
These constraints concern only the fact that the covariance matrix $\Sigma$ is the same as the one of the Gaussian component (the parametric one). In order to see whether this set of constraints is sufficient for the existence of a unique measure in the intersection $\mathcal{N}\cap\mathcal{M}$, we need to write the set of equations corresponding to (\ref{eqn:NlnSys}) in Proposition \ref{prop:identifiability}.
\begin{eqnarray*}
\int{x\left[\frac{1}{1-\lambda}f(x,y) - \frac{\lambda}{1-\lambda}f_1(x,y|\mu_1,\sigma,\rho)\right]dxdy} & = & \mu_{2,1}, \\
\int{y\left[\frac{1}{1-\lambda}f(x,y) - \frac{\lambda}{1-\lambda}f_1(x,y|\mu_1,\sigma,\rho)\right]dxdy} & = & \mu_{2,2}, \\ 
\int{x^2\left[\frac{1}{1-\lambda}f(x,y) - \frac{\lambda}{1-\lambda}f_1(x,y|\mu_1,\sigma,\rho)\right]dxdy} & = & \sigma^2, \\
\int{y^2\left[\frac{1}{1-\lambda}f(x,y) - \frac{\lambda}{1-\lambda}f_1(x,y|\mu_1,\sigma,\rho)\right]dxdy} & = & \sigma^2, \\
\int{xy\left[\frac{1}{1-\lambda}f(x,y) - \frac{\lambda}{1-\lambda}f_1(x,y|\mu_1,\sigma,\rho)\right]dxdy} & = & \rho + \mu_{2,1}\mu_{2,2} - \mu_{2,1}^2 - \mu_{2,2}^2,
\end{eqnarray*}
The number of parameters is 7, and we only have 5 equations. In order for the problem to have a unique solution, it is necessary to either add two other constraints or to consider for example $\mu_1 = (\mu_{1,1},\mu_{1,2})$ to be known\footnote{or estimated by another procedure such as $k-$means.}. Other solutions exist, but depend on the prior information. We may imagine an assumption of the form $\mu_{1,1}=a\mu_{1,2}$ and $\mu_{2,1}=b\mu_{2,2}$ for given constants $a$ and $b$.\\
The gain from relaxing the normality assumption on the second component is that we are building a model which is not constrained to a Gaussian form for the second component, but rather to a form which suits the data. The price we pay is the number of relevant constraints which must be at least equal to the number of unknown parameters.
\end{example}

%%%%%%%%%%%%%%%%%%%%%%%%%%%%%%%%%%%%%%%%%%%%%%%%%%%%%%%%%%%%%%%%%%%%%%%%%%%%%%%%%
%
%==============================================================
%%%%%%%%%%%%%%%%%%%%%%%%%%%%%%%%%%%%%%%%%%%%%%%%%%%%%%%%%%%%%%%%%%%%%%%%%%%%
%==============================================================
%
%%%%%%%%%%%%%%%%%%%%%%%%%%%%%%%%%%%%%%%%%%%%%%%%%%%%%%%%%%%%%%%%%%%%%%%%%%%%%%%%%

\section{Asymptotic properties of the new estimator}\label{sec:AsymptotResults}
\subsection{Consistency}
The double optimization procedure defining the estimator $\hat{\phi}$ defined by (\ref{eqn:MomentEstimProc}) does not permit us to use M-estimates methods to prove consistency. In \cite{KeziouThesis} Proposition 3.7 and in \cite{BroniaKeziou09} Proposition 3.4, the authors propose a method which can simply be generalized to any double optimization procedure since the idea of the proof slightly depends on the form of the optimized function. In order to restate this result here and give an exhaustive and a general proof, suppose that our estimator $\hat{\phi}$ is defined through the following double optimization procedure. Let $H$ and $H_n$ be two generic functions such that $H_n(\phi,\xi) \rightarrow H(\phi,\xi)$ in probability for any couple $(\phi,\xi)$. Define $\hat{\phi}$ and $\phi^*$ as follows:
\begin{eqnarray*}
\hat{\phi} & = & \arginf_{\phi} \sup_{\xi} H_n(\phi,\xi);\\
\phi^* & = & \arginf_{\phi} \sup_{\xi} H(\phi,\xi).
\end{eqnarray*}
We adapt the following notation:
\[\xi(\phi) = \argsup_{t} H(\phi,t), \qquad \xi_n(\phi) = \argsup_{t} H_n(\phi,t)\]
The following theorem provides sufficient conditions for consistency of $\hat{\phi}$ towards $\phi^*$. This result will then be applied to the case of our estimator.\\
Assumptions:
\begin{itemize}
\item[A1.] the estimate $\hat{\phi}$ exists (even if it is not unique);
\item[A2.] $\sup_{\xi,\phi} \left|H_n(\phi,\xi) - H(\phi,\xi)\right|$ tends to 0 in probability;
\item[A3.] for any $\phi$, the supremum of $H$ over $\xi$ is unique and isolated, i.e. $\forall \varepsilon>0, \forall \tilde{\xi}$ such that $\|\tilde{\xi}-\xi(\phi)\|>\varepsilon$, then there exists $\eta>0$ such that $H(\phi,\xi(\phi)) - H(\phi,\tilde{\xi})>\eta$;
\item[A4.] the infimum of $\phi\mapsto H(\phi,\xi(\phi))$ is unique and isolated, i.e. $\forall \varepsilon>0, \forall \phi$ such that $\|\phi-\phi^*\|>\varepsilon$, there exists $\eta>0$ such that $H(\phi,\xi(\phi))-H(\phi^*,\xi(\phi^*))>\eta$;
\item[A5.] for any $\phi$ in $\Phi$, function $\xi\mapsto H(\phi,\xi)$ is continuous.
\end{itemize}
In assumption A4, we suppose the existence and uniqueness of $\phi^*$. It does not, however, imply the uniqueness of $\hat{\phi}$. This is not a problem for our consistency result. The vector $\hat{\phi}$ may be any point which verifies the minimum of function $\phi\mapsto\sup_{\xi} H_n(\phi,\xi)$. Our consistency result shows that all vectors verifying the minimum of $\phi\mapsto\sup_{\xi} H_n(\phi,\xi)$ converge to the unique vector $\phi^*$. We also prove an asymptotic normality result which shows that even if $\hat{\phi}$ is not unique, all possible values should be in a neighborhood of radius $\mathcal{O}(n^{-1/2})$ centered at $\phi^*$.\\
The following lemma establishes a uniform convergence result for the argument of the supremum over $\xi$ of function $H_n(\phi,\xi)$ towards the one of function $H(\phi,\xi)$. It constitutes a first step towards the proof of convergence of $\hat{\phi}$ towards $\phi^*$. The proof is differed to Appendix \ref{AppendSemiPara:Lem1}.
%%%%%%%%%%%%%%%%%%%%%%%%%%%
\begin{lemma}
\label{lem:SupXiPhiDiff}
Assume A2 and A3 are verified, then 
\[\sup_{\phi}\|\xi_n(\phi) - \xi(\phi)\| \rightarrow 0 ,\qquad \textit{in probability.}\] 
\end{lemma}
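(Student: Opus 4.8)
We must show that $\sup_\phi \|\xi_n(\phi)-\xi(\phi)\| \to 0$ in probability, where $\xi_n(\phi)=\argsup_t H_n(\phi,t)$ and $\xi(\phi)=\argsup_t H(\phi,t)$, using only the uniform convergence A2 ($\sup_{\phi,\xi}|H_n-H|\to 0$ in probability) and the well-separation condition A3 on the supremum of $H$.

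**The plan.** The plan is to run a uniform version of the classical argmax-consistency argument. First I would fix $\varepsilon>0$ and, for each $\phi$, use A3 to extract the separation constant $\eta$; the subtle point is that A3 as stated gives $\eta$ depending on $\phi$, so I would first argue that one may take $\eta>0$ uniform in $\phi$, i.e.
\[
\inf_\phi\ \Big[\, H(\phi,\xi(\phi)) - \sup_{\|\tilde\xi-\xi(\phi)\|>\varepsilon} H(\phi,\tilde\xi)\,\Big]\ \geq\ \eta\ >\ 0.
\]
Granting this, the heart of the proof is the deterministic implication: on the event $\Omega_n=\{\sup_{\phi,\xi}|H_n(\phi,\xi)-H(\phi,\xi)|<\eta/2\}$, I claim $\sup_\phi\|\xi_n(\phi)-\xi(\phi)\|\leq\varepsilon$. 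To see this, suppose for some $\phi$ that $\|\xi_n(\phi)-\xi(\phi)\|>\varepsilon$. Then, reading off the chain
\[
H(\phi,\xi(\phi)) \geq H_n(\phi,\xi(\phi)) - \tfrac{\eta}{2} \geq H_n(\phi,\xi_n(\phi)) - \tfrac{\eta}{2} \geq H(\phi,\xi_n(\phi)) - \eta,
\]
where the middle inequality uses that $\xi_n(\phi)$ maximizes $H_n(\phi,\cdot)$, we obtain $H(\phi,\xi(\phi)) - H(\phi,\xi_n(\phi)) \geq -\eta$, hence $H(\phi,\xi(\phi)) - H(\phi,\xi_n(\phi)) < \eta$ fails to contradict anything directly; so I would instead chain the inequalities in the direction that makes $\xi_n(\phi)$ a near-maximizer of $H(\phi,\cdot)$, concluding $H(\phi,\xi(\phi))-H(\phi,\xi_n(\phi))<\eta$, which by the uniform separation above forces $\|\xi_n(\phi)-\xi(\phi)\|\leq\varepsilon$, a contradiction. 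Finally, since A2 gives $\mathbb{P}(\Omega_n)\to 1$, we conclude $\mathbb{P}(\sup_\phi\|\xi_n(\phi)-\xi(\phi)\|>\varepsilon)\to 0$.

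**The main obstacle.** The genuinely delicate step is upgrading the pointwise separation in A3 to a separation constant $\eta$ that is uniform over all $\phi$; the cheap $\varepsilon$–$\eta$ argument handles a single $\phi$, but $\sup_\phi$ requires a modulus that does not degenerate as $\phi$ ranges over $\Phi$. I would address this either by assuming $\Phi$ compact and invoking continuity of $\phi\mapsto H(\phi,\cdot)$ together with A5 to make the separation gap a lower-semicontinuous positive function of $\phi$ (hence bounded below by a positive constant on the compact set), or by reading A3 in its intended uniform sense. A secondary point is that $\xi_n(\phi)$ must exist and be well-defined; existence of the supremizer follows from strict concavity of $\xi\mapsto H_n(\phi,\xi)$ on the effective set $\Phi_n^+$ noted in the preceding remark, and the same deterministic comparison shows that any approximate maximizer of $H_n$ lands within $\varepsilon$ of $\xi(\phi)$, so no measurability or uniqueness issue beyond those already granted is needed.
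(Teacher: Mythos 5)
Your proposal is correct and follows essentially the same route as the paper: the paper likewise argues that on the event where $\sup_{\phi,\xi}|H_n-H|$ is small, the maximizing property of $\xi_n(\phi)$ forces $H(\phi,\xi(\phi))-H(\phi,\xi_n(\phi))$ to be small uniformly in $\phi$, and then invokes A3 to conclude (your self-corrected chain, with $\xi_n(\phi)$ as a near-maximizer of $H(\phi,\cdot)$, is exactly the paper's displayed inequality). Your observation that the separation constant $\eta$ in A3 must be taken uniform in $\phi$ for the $\sup_\phi$ conclusion is a legitimate point that the paper's own proof glosses over when it applies A3 along the sequence $a_k$ with a single $\eta$.
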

\noindent We now state our consistency theorem. The proof is differed to Appendix \ref{AppendSemiPara:Theo1}.
%%%%%%%%%%%%%%%%%%%%%%%%%
\begin{theorem}
\label{theo:MainTheorem}
Let $\xi(\phi)$ be the argument of the supremum of $\xi\mapsto H(\phi,\xi)$ for a fixed $\phi$. Assume that A1-A5 are verified, then $\hat{\phi}$ tends to $\phi^*$ in probability.
\end{theorem}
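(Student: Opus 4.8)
The plan is to reduce the double optimization to a single argmin problem by profiling out $\xi$, and then to run a standard well-separated-minimum (argmin consistency) argument. Write $h(\phi) := \sup_{\xi} H(\phi,\xi) = H(\phi,\xi(\phi))$ and $h_n(\phi) := \sup_{\xi} H_n(\phi,\xi) = H_n(\phi,\xi_n(\phi))$, so that by definition $\phi^* = \arginf_{\phi} h(\phi)$ and $\hat{\phi} = \arginf_{\phi} h_n(\phi)$, the latter existing by A1. The whole result then follows from two ingredients: the uniform (in probability) convergence $\sup_{\phi}|h_n(\phi) - h(\phi)| \to 0$, and the isolation of the minimum $\phi^*$ provided by A4.

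For the uniform convergence I would first use the elementary bound $|\sup_{\xi} H_n(\phi,\xi) - \sup_{\xi} H(\phi,\xi)| \le \sup_{\xi} |H_n(\phi,\xi) - H(\phi,\xi)|$, valid whenever the two suprema are finite, and then take the supremum over $\phi$ to obtain
\[ \sup_{\phi}\bigl| h_n(\phi) - h(\phi)\bigr| \;\le\; \sup_{\phi,\xi}\bigl| H_n(\phi,\xi) - H(\phi,\xi)\bigr|, \]
whose right-hand side tends to $0$ in probability by A2. This is the most direct route. Alternatively, and this is where Lemma \ref{lem:SupXiPhiDiff} enters, one may decompose $h_n(\phi) - h(\phi) = [H_n(\phi,\xi_n(\phi)) - H(\phi,\xi_n(\phi))] + [H(\phi,\xi_n(\phi)) - H(\phi,\xi(\phi))]$; the first bracket is controlled by A2, while the second tends to $0$ because $\xi_n(\phi) \to \xi(\phi)$ uniformly (Lemma \ref{lem:SupXiPhiDiff}, which consumes A2 and A3) together with the continuity A5 of $\xi \mapsto H(\phi,\xi)$. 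Either way, A3 and A5 guarantee that the maximizer $\xi_n(\phi)$ over the unbounded set $\mathbb{R}^{\ell+1}$ is genuinely attained and stable, which is what legitimizes the profiling step and will be reused for the asymptotic normality result.

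For the second step I would argue as follows. Since $\hat{\phi}$ minimizes $h_n$, we have $h_n(\hat{\phi}) \le h_n(\phi^*)$, and inserting and removing $h_n$ twice gives
\[ 0 \;\le\; h(\hat{\phi}) - h(\phi^*) \;\le\; 2\,\sup_{\phi}\bigl| h_n(\phi) - h(\phi)\bigr|, \]
where the left inequality is simply that $\phi^*$ minimizes $h$. Reading A4 with the quantifier on $\eta$ placed before the quantifier on $\phi$ (the intended uniform well-separation), for every $\varepsilon > 0$ there is $\eta > 0$ such that $\|\phi - \phi^*\| > \varepsilon$ forces $h(\phi) - h(\phi^*) > \eta$. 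Hence the event $\{\|\hat{\phi} - \phi^*\| > \varepsilon\}$ is contained in $\{2\sup_{\phi}|h_n - h| > \eta\}$, whose probability tends to $0$ by the first step. Since $\varepsilon$ is arbitrary, $\hat{\phi} \to \phi^*$ in probability.

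The main obstacle is not the final argmin argument, which is routine, but making the profiling rigorous: the inner supremum runs over the unbounded domain $\mathbb{R}^{\ell+1}$ and can equal $+\infty$ for $\phi$ outside the strict-concavity region $\Phi^+$ (resp. $\Phi_n^+$), so one must first confine the analysis to the effective set where both $h$ and $h_n$ are finite and argue that the degenerate $\phi$ never attain the outer infimum. This is precisely the content that A2 (uniformity in both arguments), A3 (unique, isolated, hence interior maximizer) and Lemma \ref{lem:SupXiPhiDiff} are designed to supply. A secondary point worth flagging is the quantifier order in A4: the proof needs a single $\eta$ valid for all $\phi$ at distance $> \varepsilon$ from $\phi^*$, i.e. a genuinely well-separated minimum, and not merely a pointwise strict minimum.
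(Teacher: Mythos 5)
Your proof is correct, but it takes a genuinely more economical route than the paper's. The paper argues by contradiction and controls $H(\hat{\phi},\xi(\hat{\phi})) - H_n(\hat{\phi},\xi_n(\hat{\phi}))$ by splitting off the term $H(\hat{\phi},\xi(\hat{\phi})) - H(\hat{\phi},\xi_n(\hat{\phi}))$, which it bounds using the uniform convergence of the inner maximizers (Lemma \ref{lem:SupXiPhiDiff}, consuming A2 and A3) together with the continuity assumption A5; it does the same at $\phi^*$ to compare $H(\phi^*,\xi_n(\phi^*))$ with $H(\phi^*,\xi(\phi^*))$. Your first route replaces all of this with the elementary inequality $\left|\sup_{\xi} H_n(\phi,\xi) - \sup_{\xi} H(\phi,\xi)\right| \le \sup_{\xi}\left|H_n(\phi,\xi) - H(\phi,\xi)\right|$, so that the profiled criteria satisfy $\sup_{\phi}|h_n(\phi)-h(\phi)| \le \sup_{\phi,\xi}|H_n-H|$ directly from A2, and the standard well-separated-argmin argument then closes the proof using only A1, A2 and A4. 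What your approach buys is a shorter proof that does not actually consume A3 or A5 (nor Lemma \ref{lem:SupXiPhiDiff}) for consistency; what the paper's approach buys is the intermediate Lemma \ref{lem:SupXiPhiDiff}, which it needs anyway for the asymptotic normality proof, and a bound on $H(\phi^*,\xi_n(\phi^*)) - H(\phi^*,\xi(\phi^*))$ in both directions where only the one-sided bound $H(\phi^*,\xi_n(\phi^*)) \le H(\phi^*,\xi(\phi^*))$ (free, by definition of $\xi(\phi^*)$) is actually required. Your two flagged caveats are both genuine and apply equally to the paper's own proof: A4 must be read with $\eta$ depending only on $\varepsilon$ and not on $\phi$ (otherwise $\eta$ would be random when evaluated at $\hat{\phi}$), and the elementary sup-inequality requires both suprema to be finite, which forces the analysis onto the effective set where $\xi\mapsto H(\phi,\xi)$ and $\xi\mapsto H_n(\phi,\xi)$ are bounded above; under A2 these two sets coincide with high probability, and the degenerate $\phi$ are irrelevant to the outer infimum, exactly as you say.
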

%%%%%%%%%%%%%%%%%%%%%%%%%%%%%
\noindent Let's now go back to our optimization problem (\ref{eqn:MomentEstimProc}) in order to simplify the previous assumptions. First of all, we need to specify functions $H$ and $H_n$. Define the function $h$ as follows. Let $\phi = (\lambda,\theta,\alpha)$,
\[h(\phi,\xi, z) = \xi^t m(\alpha)  - \frac{1}{1-\lambda}\psi\left(\xi^t g(z)\right) + \frac{\lambda}{1-\lambda} \int{\psi\left(\xi^t g(x)\right) dP_1(x|\theta)}.\]
The functions $H$ and $H_n$ can now be defined through $h$ by
\[H(\phi,\xi) = P_T h(\phi,\xi,.),\qquad H_n(\phi,\xi) = P_n h(\phi,\xi,.).\]
%In example \ref{ex:Chi2LinConstr}, we considered the the case of the Pearson's $\chi^2$. The supremum is infinity whenever the matrix $\Omega$ is s.d.n. It is thus interesting to define the \emph{effective} set of parameters. Define the set $\Phi^+$ by
%\[\Phi^+ = \{\phi\in\Phi \text{ s.t. } \xi\mapsto H(\phi,\xi) \text{ is strictly concave}\}\]
%Outside the set $\Phi^+$, function $\xi\mapsto H(\phi,\xi)$ is not upper bounded. 
\begin{theorem}
\label{theo:MainTheoremMomConstr}
Assume that A1, A4 and A5 are verified for $\Phi$ replaced by $\Phi^+$ (defined earlier by). Suppose also that 
\begin{equation}
\sup_{\xi\in\mathbb{R}^{\ell+1}}\left|\int{\psi\left(\xi^t g(x)\right)dP_T(x)} - \frac{1}{n}\sum_{i=1}^n{\psi\left(\xi^t g(X_i)\right)}\right| \xrightarrow[\mathbb{P}]{n\rightarrow\infty} 0,
\label{eqn:AssumptionB2}
\end{equation} 
then the estimator defined by (\ref{eqn:MomentEstimProc}) is consistent.
\end{theorem}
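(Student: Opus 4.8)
The plan is to read Theorem \ref{theo:MainTheoremMomConstr} as a specialization of the master consistency result, Theorem \ref{theo:MainTheorem}, applied to the concrete functions $H(\phi,\xi)=P_Th(\phi,\xi,\cdot)$ and $H_n(\phi,\xi)=P_nh(\phi,\xi,\cdot)$ built from $h$. Since A1, A4 and A5 are assumed directly on $\Phi^+$, and Theorem \ref{theo:MainTheorem} requires A1--A5, the entire argument reduces to deriving the two remaining hypotheses A2 and A3 from the structure of $h$ and from the single uniform-convergence assumption (\ref{eqn:AssumptionB2}). I would first justify restricting attention to $\Phi^+$: for $\phi\in\Phi\setminus\Phi^+$ the map $\xi\mapsto H(\phi,\xi)$ fails to be strictly concave and $\sup_\xi H(\phi,\xi)=+\infty$, as already observed when passing from (\ref{eqn:MomentTrueEstimProc}) to (\ref{eqn:MomentExactEstimProc}), so such $\phi$ never compete in the outer infimum and may be discarded. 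This is precisely why the hypotheses are imposed with $\Phi$ replaced by $\Phi^+$.

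For A2, the key observation is that the parametric term $\frac{\lambda}{1-\lambda}\int\psi(\xi^tg)\,dP_1(\cdot|\theta)$ is identical in $H$ and in $H_n$, since it carries no empirical average, and the linear term $\xi^tm(\alpha)$ cancels as well. Hence only the $P_T$-against-$P_n$ piece survives, giving
\[
H_n(\phi,\xi)-H(\phi,\xi)=\frac{1}{1-\lambda}\left(\int\psi\!\left(\xi^tg(x)\right)dP_T(x)-\frac{1}{n}\sum_{i=1}^n\psi\!\left(\xi^tg(X_i)\right)\right).
\]
Taking absolute values and then suprema factorizes the bound as $\sup_{\phi,\xi}|H_n-H|\le\bigl(\sup_\phi\frac{1}{1-\lambda}\bigr)\cdot\sup_\xi|\cdots|$, whose second factor tends to $0$ in probability by (\ref{eqn:AssumptionB2}). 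I expect the main obstacle to lie exactly here: the factor $1/(1-\lambda)$ is unbounded as $\lambda\uparrow1$, so A2 can hold only if $\lambda$ stays bounded away from $1$ on the effective parameter set. I would therefore make explicit, or invoke from the standing assumptions on $\Phi$, that $\lambda$ ranges in a compact subinterval of $(0,1)$, which renders $\sup_\phi 1/(1-\lambda)$ finite and thus yields A2.

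For A3 I would use that, by the very definition of $\Phi^+$, the map $\xi\mapsto H(\phi,\xi)$ is strictly concave for every $\phi\in\Phi^+$; together with the existence of the maximizer $\xi(\phi)$ this gives uniqueness. To upgrade uniqueness to the isolation demanded in A3, fix $\phi$ and $\varepsilon>0$: concavity forces $\xi\mapsto H(\phi,\xi)$ to be nonincreasing along every ray emanating from $\xi(\phi)$, so for any $\tilde\xi$ with $\|\tilde\xi-\xi(\phi)\|>\varepsilon$ one has $H(\phi,\tilde\xi)\le H(\phi,\xi_\varepsilon)$, where $\xi_\varepsilon$ is the intersection of that ray with the sphere $\{\|\xi-\xi(\phi)\|=\varepsilon\}$. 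Since this sphere is compact and $\xi\mapsto H(\phi,\xi)$ is continuous by A5, the quantity $\eta:=\min_{\|\xi-\xi(\phi)\|=\varepsilon}\bigl(H(\phi,\xi(\phi))-H(\phi,\xi)\bigr)$ is attained and is strictly positive by strict concavity, which is exactly A3. With A1--A5 all in force on $\Phi^+$, Theorem \ref{theo:MainTheorem} applies verbatim and delivers the consistency of the estimator (\ref{eqn:MomentEstimProc}).
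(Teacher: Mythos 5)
Your proposal is correct and follows essentially the same route as the paper: reduce to Theorem \ref{theo:MainTheorem} by verifying A2 from the uniform limit (\ref{eqn:AssumptionB2}) (the $\xi^t m(\alpha)$ and $P_1$ terms of $H_n-H$ cancel, leaving only the $P_n$-versus-$P_T$ piece) and A3 from the strict concavity of $\xi\mapsto H(\phi,\xi)$ on $\Phi^+$. You are in fact more careful than the paper in two spots it glosses over --- the unbounded factor $1/(1-\lambda)$, which does require $\lambda$ to stay bounded away from $1$ for A2 to follow, and the upgrade from uniqueness of the maximizer to the isolation property in A3 via compactness of the sphere $\{\|\xi-\xi(\phi)\|=\varepsilon\}$ --- so your additions are refinements of, not deviations from, the paper's argument.
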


\noindent The proof is differed to Appendix \ref{AppendSemiPara:Theo1}. Assumption A5 could be handled using Lebesgue's continuity theorem if one finds a $P_T-$integrable function $\tilde{h}$ such that $|\psi\left(\xi^t g(z)\right)|\leq \tilde{h}(z)$. This is, however, not possible in general unless we restrain $\xi$ to a compact set. Otherwise, we need to verify this assumption according the situation we have in hand; see example \ref{ex:Chi2Consistency} below for more details. The uniform limit (\ref{eqn:AssumptionB2}) can be treated according to the divergence and the constraints which we would like to impose. A general method is to prove that the class of functions $\{x\mapsto \psi\left(\xi^t g(x)\right),\xi\in\mathbb{R}^{\ell+1}\}$ is a Glivenko-Cantelli class of functions, see \cite{Vaart} Chap. 19 Section 2 and the examples therein for some possibilities.
%existence of $\hat{\phi}$ and continuity of $\xi(\phi)$.
\begin{remark}
\label{rem:PhiPlusJacobMat}
Under suitable differentiability assumptions, the set $\Phi^+$ defined earlier can be rewritten as
\[\Phi^+ = \Phi\cap \left\{\phi: \quad J_{H(\phi,.)} \text{ is definite negative}\right\},\]
where $J_{H(\phi,.)}$ is the Hessian matrix of function $\xi\mapsto H(\phi,\xi)$ and is given by
\begin{equation}
J_{H(\phi,.)} = -\int{g(x)g(x)^t\psi''(\xi^tg(x)) \left(\frac{1}{1-\lambda}dP_T - \frac{\lambda}{1-\lambda}dP_1\right)(x)}.
\label{eqn:JacobHMomentConstr}
\end{equation}
\end{remark}
\noindent The problem with using the set $\Phi^+$ is that if we take a point $\phi$ in the interior of $\Phi$, there is no guarantee that it would be an interior point of $\Phi^+$. This will impose more difficulties in the proof of the asymptotic normality. We prove in the next proposition that this is however true for $\phi^*$. Besides, the set $\Phi^+$ is open as soon as $\text{int}{\Phi}$ is not void. The proof is differed to Appendix \ref{AppendSemiPara:Prop2}.
%%Proposition
\begin{proposition}
\label{prop:ContinDiffxiMom}
Assume that the function $\xi\mapsto H(\phi,\xi)$ is of class $\mathcal{C}^2$ for any $\phi\in\Phi^+$. Suppose that $\phi^*$ is an interior point of $\Phi$, then there exists a neighborhood $\mathcal{V}$ of $\phi^*$ such that for any $\phi\in\mathcal{V}$, $J_{H(\phi,.)}$ is definite negative and thus $\xi(\phi)$ exists and is finite. Moreover, function $\phi\mapsto\xi(\phi)$ is continuously differentiable on $\mathcal{V}$.
\end{proposition}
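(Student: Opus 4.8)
The plan is to view $\xi(\phi)$ as the solution of the first-order stationarity equation $\nabla_\xi H(\phi,\xi)=0$ and to apply the implicit function theorem at the point $(\phi^*,\xi(\phi^*))$. First I would record that, since $\phi^*\in\Phi^{++}\subset\Phi^+$, the map $\xi\mapsto H(\phi^*,\xi)$ is strictly concave and its supremum is attained at a finite point $\xi(\phi^*)$; this is exactly the duality attainment already invoked above, where $0=H(\phi^*,\xi(\phi^*))$. Being an interior maximizer, $\xi(\phi^*)$ satisfies $\nabla_\xi H(\phi^*,\xi(\phi^*))=0$, so the only ingredient remaining to start the argument is the invertibility of the Hessian at that point.

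Next I would establish that $J_{H(\phi^*,.)}$, evaluated at $\xi=\xi(\phi^*)$, is definite negative. Using the expression of Remark~\ref{rem:PhiPlusJacobMat} and the fact that at $\phi^*$ the measure $\frac{1}{1-\lambda^*}dP_T-\frac{\lambda^*}{1-\lambda^*}dP_1(\cdot|\theta^*)$ coincides with the probability measure $P_0^*$, for every $u\in\mathbb{R}^{\ell+1}\setminus\{0\}$ one has
\[
u^t J_{H(\phi^*,.)}\,u = -\int \big(u^t g(x)\big)^2\,\psi''\!\big(\xi(\phi^*)^t g(x)\big)\,dP_0^*(x).
\]
Since $\varphi$ is strictly convex its conjugate $\psi$ is strictly convex as well, whence $\psi''>0$, and provided the components of $g$ (including the constant coordinate $g_0=1$) are linearly independent $P_0^*$-almost everywhere, the function $u^t g$ cannot vanish $P_0^*$-a.e.; the integral is therefore strictly positive and $u^t J_{H(\phi^*,.)}u<0$. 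This shows $J_{H(\phi^*,.)}$ is definite negative, hence nonsingular.

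I would then set $G(\phi,\xi)=\nabla_\xi H(\phi,\xi)$. Under the standing $\mathcal{C}^2$ hypothesis together with a domination permitting differentiation under the integral sign (so that $\theta\mapsto P_1(\cdot|\theta)$ and $\alpha\mapsto m(\alpha)$ contribute $\mathcal{C}^1$ dependence), $G$ is $\mathcal{C}^1$ jointly in $(\phi,\xi)$ near $(\phi^*,\xi(\phi^*))$. Because $G(\phi^*,\xi(\phi^*))=0$ and $\partial_\xi G(\phi^*,\xi(\phi^*))=J_{H(\phi^*,.)}$ is invertible, the implicit function theorem furnishes neighborhoods $\mathcal{V}$ of $\phi^*$ and $\mathcal{W}$ of $\xi(\phi^*)$ and a unique continuously differentiable map $\phi\mapsto\tilde\xi(\phi)\in\mathcal{W}$ solving $G(\phi,\tilde\xi(\phi))=0$ on $\mathcal{V}$. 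Shrinking $\mathcal{V}$ if necessary and invoking the openness of $\Phi^+$, I may assume $\mathcal{V}\subset\Phi^+$, so that $\xi\mapsto H(\phi,\xi)$ is strictly concave for every $\phi\in\mathcal{V}$; the unique critical point $\tilde\xi(\phi)$ is then the global maximizer, i.e. $\tilde\xi(\phi)=\xi(\phi)$. This simultaneously yields that $\xi(\phi)$ exists and is finite, that $J_{H(\phi,.)}$ is definite negative on $\mathcal{V}$ by strict concavity, and that $\phi\mapsto\xi(\phi)$ is continuously differentiable.

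The step I expect to be the main obstacle is the passage from the \emph{local} stationary branch produced by the implicit function theorem to the \emph{global} argument of the supremum $\xi(\phi)=\argsup_\xi H(\phi,\xi)$: a priori the theorem only controls a critical point lying in $\mathcal{W}$, and identifying it with the global maximizer is precisely what forces $\mathcal{V}\subset\Phi^+$, that is, the openness of $\Phi^+$ around $\phi^*$ (equivalently, that $\phi^*$ is an interior point of $\Phi^+$). A secondary technical point is justifying the joint $\mathcal{C}^1$ regularity of $G$ in $\phi$ through differentiation under the integral sign, which is where integrability and domination conditions on $g$ and on the derivatives of $\theta\mapsto P_1(\cdot|\theta)$ come into play.
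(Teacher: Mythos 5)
Your overall strategy matches the paper's: establish that $J_{H(\phi^*,.)}$ is definite negative because $\frac{1}{1-\lambda^*}P_T-\frac{\lambda^*}{1-\lambda^*}P_1(\cdot|\theta^*)=P_0^*$ is a probability measure, then apply the implicit function theorem to $\nabla_\xi H(\phi,\cdot)=0$ to get continuous differentiability of $\phi\mapsto\xi(\phi)$. Your verification that $u^tJ_{H(\phi^*,.)}u<0$ is actually more explicit than the paper's one-line assertion (at the cost of an added nondegeneracy hypothesis on the components of $g$ under $P_0^*$, which is anyway implicit in the invertibility of $J_{\xi^*,\xi^*}$ assumed later).

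There is, however, a genuine gap at the first assertion of the proposition: the existence of a neighborhood $\mathcal{V}$ of $\phi^*$ on which $J_{H(\phi,.)}$ is definite negative. You obtain $\mathcal{V}\subset\Phi^+$ by ``invoking the openness of $\Phi^+$,'' but the fact that $\phi^*$ is an \emph{interior} point of $\Phi^+$ is precisely what this proposition is meant to establish --- the paper says so explicitly just before the statement (``if we take a point $\phi$ in the interior of $\Phi$, there is no guarantee that it would be an interior point of $\Phi^+$\ldots We prove in the next proposition that this is however true for $\phi^*$''). You correctly identify this as the main obstacle, but you leave it unresolved, so the argument as written is circular at its key step. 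The paper closes it with a short direct argument: the set of negative definite symmetric matrices is open, and $\phi\mapsto J_{H(\phi,.)}$ is continuous, so the preimage of a small ball of negative definite matrices centered at $J_{H(\phi^*,.)}$ contains a ball $B(\phi^*,\tilde r)$; this ball is the desired $\mathcal{V}$, on which strict concavity of $\xi\mapsto H(\phi,\xi)$ gives existence and finiteness of $\xi(\phi)$, after which your implicit-function-theorem step goes through essentially verbatim. Inserting that continuity-plus-openness argument (in place of the appeal to openness of $\Phi^+$) would complete your proof.
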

\begin{corollary}
\label{cor:GlivenkoCantelliClass}
Assume that the function $\xi\mapsto H(\phi,\xi)$ is of class $\mathcal{C}^2$ for any $\phi\in\Phi^+$. If $\Phi$ is bounded, then there exists a compact neighborhood $\bar{\mathcal{V}}$ of $\phi^*$ such that $\xi(\bar{\mathcal{V}})$ is bounded and $\{x\mapsto \psi\left(\xi^t g(x)\right),\xi\in\xi(\bar{\mathcal{V}})\}$ is a Glivenko-Cantelli class of functions.
\end{corollary}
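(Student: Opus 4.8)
The plan is to establish the corollary by combining Proposition~\ref{prop:ContinDiffxiMom}, which gives a local structure around $\phi^*$, with a classical Glivenko--Cantelli criterion for parametric classes of functions. The statement has two parts: first, the existence of a compact neighborhood $\bar{\mathcal{V}}$ of $\phi^*$ on which $\xi(\cdot)$ is well-defined and whose image $\xi(\bar{\mathcal{V}})$ is bounded; second, the Glivenko--Cantelli property of the induced class $\{x\mapsto\psi(\xi^t g(x)):\xi\in\xi(\bar{\mathcal{V}})\}$. The $\mathcal{C}^2$ hypothesis on $\xi\mapsto H(\phi,\xi)$ is precisely what activates Proposition~\ref{prop:ContinDiffxiMom}, so I would invoke it immediately.

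**First** I would use Proposition~\ref{prop:ContinDiffxiMom}: since $\phi^*$ is an interior point of $\Phi$, there is an open neighborhood $\mathcal{V}$ on which $J_{H(\phi,.)}$ is definite negative, the maximizer $\xi(\phi)$ exists and is finite, and the map $\phi\mapsto\xi(\phi)$ is continuously differentiable. Shrinking $\mathcal{V}$, I can choose a closed ball $\bar{\mathcal{V}}\subset\mathcal{V}$ which is compact because $\Phi$ is bounded (so bounded closed subsets are compact). The key point is that $\phi\mapsto\xi(\phi)$ is continuous on the compact set $\bar{\mathcal{V}}$, hence its image $\xi(\bar{\mathcal{V}})$ is a compact, and in particular bounded, subset of $\mathbb{R}^{\ell+1}$. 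This settles the first assertion.

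**Next**, for the Glivenko--Cantelli property I would regard the family $\mathcal{F}=\{x\mapsto\psi(\xi^t g(x)):\xi\in\Xi\}$ with $\Xi:=\xi(\bar{\mathcal{V}})$ compact as a parametric class indexed by the compact set $\Xi$. The standard sufficient condition (see \cite{Vaart} Chap.\ 19) is that the map $\xi\mapsto\psi(\xi^t g(x))$ be continuous for $P_T$-almost every $x$ and that the class admit an integrable envelope, i.e.\ a $P_T$-integrable function $F$ with $\sup_{\xi\in\Xi}|\psi(\xi^t g(x))|\leq F(x)$. Continuity in $\xi$ holds since $\psi$ is convex hence continuous on the interior of its domain and $\xi\mapsto\xi^t g(x)$ is affine. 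For the envelope, I would exploit that $\Xi$ is bounded: writing $R=\sup_{\xi\in\Xi}\|\xi\|$, one has $|\xi^t g(x)|\leq R\|g(x)\|$, and by convexity of $\psi$ the quantity $\sup_{\xi\in\Xi}|\psi(\xi^t g(x))|$ is controlled by the values of $\psi$ at the extreme arguments $\pm R\|g(x)\|$; taking $F(x)=\max\{\psi(R\|g(x)\|),\psi(-R\|g(x)\|),|\psi(0)|\}$ gives a measurable dominating function. Its $P_T$-integrability is exactly the kind of moment/growth condition already assumed on $g$ (through $\int|g_i|dP_T<\infty$ and the structural hypotheses ensuring $H$ is finite on $\Phi^+$), and the compactness of $\Xi$ is what makes the envelope finite.

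**The main obstacle** is the integrability of the envelope $F$: unlike the earlier continuity discussion where one could simply restrict $\xi$ to a compact set, here the tails of $g(x)$ under $P_T$ interact with the growth of $\psi$, so verifying $\int F\,dP_T<\infty$ is not automatic and depends on the chosen divergence and constraint functions. I would therefore state this as the crux and note that for the Cressie--Read family with polynomial $\psi$ and moment-type $g$, $F$ is a polynomial in $\|g(x)\|$ and integrability reduces to finiteness of the corresponding moments of $P_T$ --- which is consistent with the assumptions maintained throughout the paper. Once the envelope is integrable and pointwise continuity holds, the Glivenko--Cantelli theorem for classes indexed by a compact set yields $\sup_{\xi\in\Xi}|(P_n-P_T)\psi(\xi^t g(\cdot))|\to 0$ almost surely, completing the proof.
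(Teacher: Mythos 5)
Your proof follows essentially the same route as the paper: the first assertion via Proposition~\ref{prop:ContinDiffxiMom} and continuity of $\phi\mapsto\xi(\phi)$ on a compact neighborhood, and the second via the classical Glivenko--Cantelli criterion for a pointwise-continuous class indexed by a compact set (the paper simply cites Example 19.7 of \cite{Vaart}, which is exactly the ``continuity plus integrable envelope'' argument you spell out). Your explicit discussion of the envelope's $P_T$-integrability is a welcome precision --- the paper declares this step ``immediate'' while it does in fact require the moment-type condition you identify --- but it does not change the substance of the argument.
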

\begin{proof}
The first part of the corollary is an immediate result of Proposition \ref{prop:ContinDiffxiMom} and the continuity of function $\phi\mapsto\xi(\phi)$ over $\text{int}(\Phi^+)$. The implicit functions theorem permits to conclude that $\xi(\phi)$ is continuously differentiable over $\text{int}(\Phi^+)$. The second part is an immediate result of Example 19.7 page 271 from \cite{Vaart}.
\end{proof}
\noindent This corollary suggests that in order to prove the consistency of $\hat{\phi}$, it suffices to restrict the values of $\phi$ on $\Phi^+$ and the values of $\xi$ on $\xi(\text{int}(\Phi^+))$ in the definition of $\hat{\phi}$ (\ref{eqn:MomentEstimProc}). Besides, since $\{x\mapsto \psi\left(\xi^t g(x)\right),\xi\in\xi(\text{int}(\Phi^+))\}$ is a Glivenko-Cantelli class of functions, the uniform limit (\ref{eqn:AssumptionB2}) is verified by the Glivenko-Cantelli theorem.

\begin{remark}
There is a great difference between the set $\Phi^+$ where $J_H$ is s.d.n. (see Remark \ref{rem:PhiPlusJacobMat}) and the set $\Phi^{++}$ where only $\frac{1}{1-\lambda}dP_T - \frac{\lambda}{1-\lambda}dP_1$ is a probability measure. Indeed, there is a strict inclusion in the sense that if $\frac{1}{1-\lambda}dP_T - \frac{\lambda}{1-\lambda}dP_1$ is a probability measure, then $J_H$ is s.d.n., but the inverse does not hold. Figure \ref{fig:PosSetVsPosDefSet} shows this difference. Furthermore, it is clearly simpler to check for a vector $\phi$ if the matrix $J_H$ is s.d.n. It suffices to calculate the integral\footnote{If function $g$ is a polynomial, i.e. moment constraints, then the integral is a mere subtractions between the moments of $P_T$ and the ones of $P_1$.} (even numerically) and then use some rule such as Sylvester's rule to check if it is definite negative, see the example below. However, in order to check if the measure $\frac{1}{1-\lambda}dP_T - \frac{\lambda}{1-\lambda}dP_1$ is positive, we need to verify it on all $\mathbb{R}^r$.
\begin{figure}[h]
\centering
\includegraphics[scale=0.4]{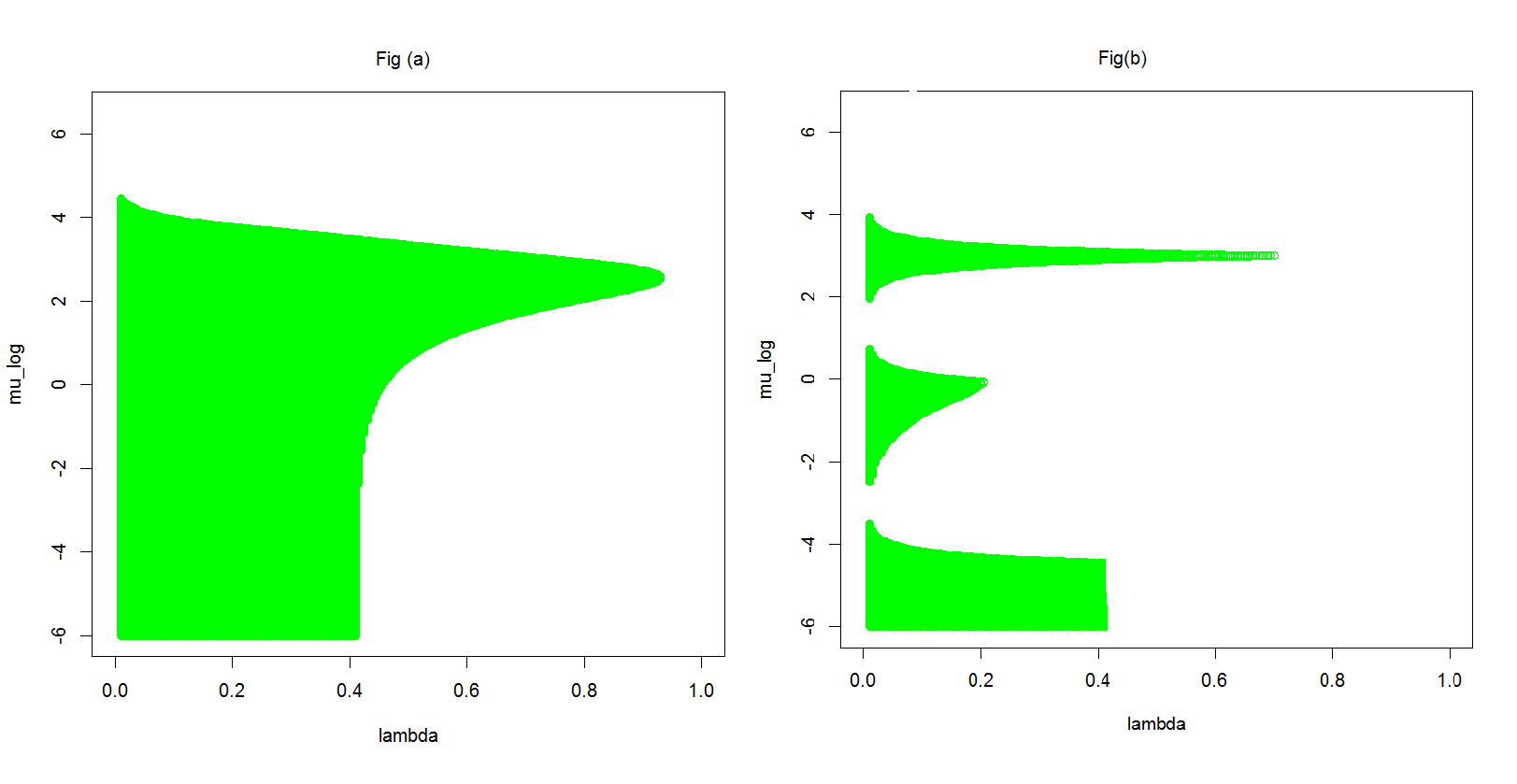}
\caption{Differences between the set where $\frac{1}{1-\lambda}dP_T - \frac{\lambda}{1-\lambda}dP_1$ is positive (Fig (b)) and the set $\Phi^+$ (Fig (a)) in a Weibull--Lognormal mixture.}
\label{fig:PosSetVsPosDefSet}
\end{figure}
\end{remark}
\begin{remark}
The previous remark shows the interest of adapting a methodology based on the larger set $\Phi^+$. We have a larger and better space to search inside for the triplet $(\lambda^*,\theta^*,\alpha^*)$. For example, in Figure \ref{fig:PosSetVsPosDefSet}, the optimization algorithm which tries to solve (\ref{eqn:MomentEstimProc}) gets stuck if we only search in the set of parameters for which $\frac{1}{1-\lambda}dP_T - \frac{\lambda}{1-\lambda}dP_1$ is a probability measure. This does not happen if we search in the set $\Phi^+$. Moreover, even if the algorithm returns a triplet $(\hat{\lambda},\hat{\theta},\hat{\alpha})$ for which the semiparametric component $P_0=\frac{1}{1-\lambda}dP_T - \frac{\lambda}{1-\lambda}dP_1$ is not a probability measure, it should not mean that the procedure failed. This is because we are looking for the parameters and not to estimate $P_0$. Besides, it is still possible to threshold the negative values from the density and then regularize in order to integrate to one. 
\end{remark}
\begin{example}[$\chi^2$ case]
\label{ex:Chi2Consistency}
Consider the case of a two-component semiparapetric mixture model where $P_0$ is defined through its first three moments. In other words, the set of constraints $\mathcal{M}_{\alpha}$ is given by
\[\mathcal{M}_{\alpha} = \left\{Q: \int{dQ(x)}=1,\; \int{xdQ(x)}=m_1(\alpha),\; \int{x^2dQ(x)}=m_2(\alpha),\; \int{x^3dQ(x)}=m_3(\alpha)\right\}.\]
We have already seen in example \ref{ex:Chi2LinConstr} that if $\psi(t)=t^2/2+t$, the Pearson's $\chi^2$ convex conjugate, then the optimization over $\xi$ can be solved and the solution is given by
\[\xi(\phi) = \Omega^{-1}\left(m(\alpha) - \int{g(x)\left(\frac{1}{1-\lambda}dP(x)-\frac{\lambda}{1-\lambda}dP_1(x|\theta)\right)}\right), \text{ for } \phi\in\Phi^+.\]
Let $M_i$ denotes the moment of order $i$ of $P_T$. Denote also $M_i^{(1)}(\theta)$ the moment of order $i$ of the parametric component $P_1(.|\theta)$.
\[M_i = \mathbb{E}_{P_T}[X^i],\qquad M_i^{(1)}(\theta)=\mathbb{E}_{P_1(.|\theta)}[X^i].\]
A simple calculus shows that:
\begin{eqnarray*}
\Omega & = & \int{g(x)g(x)^t\left(\frac{1}{1-\lambda}dP(x)-\frac{\lambda}{1-\lambda}dP_1(x|\theta)\right)} \\
  & = & \left[\frac{1}{1-\lambda}M_{i+j-2} - \frac{\lambda}{1-\lambda}M_{i+j-2}^{(1)}(\theta)\right]_{i,j\in\{1,\cdots,4\}}.
\end{eqnarray*}
The solution holds for any $\phi\in\text{int}(\Phi^+)$. Continuity assumption A5 over $\xi\mapsto H(\phi,\xi)$ is simplified here because function $H$ is a polynomial of degree 2. We have:
\begin{multline*}
H(\phi,\xi) = \xi^tm(\alpha) - \left[\frac{1}{2}\xi_1^2+\xi_1+(\xi_1\xi_2+\xi_2)\left(\frac{1}{1-\lambda}M_1-\frac{\lambda}{1-\lambda}M_1^{(1)}(\theta)\right)\right. \\ +(\xi_2^2/2+\xi_1\xi_2+\xi_3)\left(\frac{1}{1-\lambda}M_2-\frac{\lambda}{1-\lambda}M_2^{(1)}(\theta)\right) + (\xi_1\xi_4+\xi_2\xi_3+\xi_4)\left(\frac{1}{1-\lambda}M_3-\frac{\lambda}{1-\lambda}M_3^{(1)}(\theta)\right) \\
+ (\xi_3^2/2+\xi_2\xi_4)\left(\frac{1}{1-\lambda}M_4-\frac{\lambda}{1-\lambda}M_4^{(1)}(\theta)\right) + \xi_3\xi_4\left(\frac{1}{1-\lambda}M_5-\frac{\lambda}{1-\lambda}M_5^{(1)}(\theta)\right) \\
\left. + \xi_4^2/2 \left(\frac{1}{1-\lambda}M_6-\frac{\lambda}{1-\lambda}M_6^{(1)}(\theta)\right)\right]
\end{multline*}
Regularity of function $\phi\mapsto\xi(\phi)$ is directly tied by the regularity of the moments of $P_1(.|\theta)$ with respect to $\theta$. If $M_i^{(1)}$ is continuous with respect to $\theta$ and $m(\alpha)$ is continuous with respect to $\alpha$, then the existence of $\phi^*$ becomes immediate as soon as the set $\Phi$ is compact.\\
If $\phi^*$ is an interior point of $\Phi$, then Proposition \ref{prop:ContinDiffxiMom} and Corollary \ref{cor:GlivenkoCantelliClass} apply. Thus int$(\Phi^+)$ is non void and the class $\{x\mapsto \psi\left(\xi^t g(x)\right),\xi\in\xi(\text{int}(\Phi^+))\}$ is a Glivenko-Cantelli class of functions. Assumption A4 remains specific to the model we consider. \\
The previous calculus shows that our procedure for estimating $\hat{\phi}$ can be done efficiently and the complexity of the calculus does not depend on the dimension of the data. Besides, no numerical integration is needed.
\end{example}
%%%%%%%%%%%%%%%%%%%%%%%%%%%%%%%%%%%%%%%%%%%%%%%%%%%%%%%
% =============
%%%%%%%%%%%%%%%%%%%%%%%%%%%%%%%%%%%%%%%%%%%%%%%%%%%%%%%
\subsection{Asymptotic normality}
We will suppose that the model $P_1(.|\phi)$ has a density $p_1(.|\phi)$ with respect to the Lebesgue measure which is of class $\mathcal{C}^2(\text{int}(\Phi^+))$ and that $\psi$ is $\mathcal{C}^2(\mathbb{R})$. In order to simplify the formula below, we suppose that $\psi'(0)=1$ and $\psi''(0)=1$. These are not restrictive assumptions and can be relaxed. Recall that they are both verified in the class of Cressie-Read functions (\ref{eqn:CressieReadPhi}).\\
Define the following matrices:
\begin{eqnarray}
J_{\phi^*,\xi^*} & = & \resizebox{0.75\textwidth}{!}{ $\left( \frac{1}{(1-\lambda^*)^2}\left[-\mathbb{E}_{P_T}\left[g(X)\right] + \mathbb{E}_{P_1(.|\theta^*)}\left[g(X)\right]\right], \frac{\lambda^*}{1-\lambda^*}\int{g(x)\nabla_{\theta}p_1(x|\theta^*)dx}, \nabla m(\alpha^*) \right)$}; \label{eqn:NormalAsymMomJ1} \\
J_{\xi^*,\xi^*} & = & \mathbb{E}_{P_0^*}\left[g(X)g(X)^t\right]; \label{eqn:NormalAsymMomJ2}\\
\Sigma & = & \left(J_{\phi^*,\xi^*}^t J_{\xi^*,\xi^*} J_{\phi^*,\xi^*}\right)^{-1}; \label{eqn:NormalAsymMomSigma}\\
H & = & \Sigma J_{\phi^*,\xi^*}^t J_{\xi^*,\xi^*}^{-1}; \label{eqn:NormalAsymMomH}\\  
W & = & J_{\xi^*,\xi^*}^{-1} - J_{\xi^*,\xi^*}^{-1} J_{\phi^*,\xi^*} \Sigma J_{\phi^*,\xi^*}^t J_{\xi^*,\xi^*}^{-1}. \label{eqn:NormalAsymMomP}
\end{eqnarray}
Recall the definition of $\Phi^+$ and define similarly the set $\Phi_n^+$
\begin{eqnarray}
\Phi^+ & = & \left\{\phi: \quad \int{g(x)g(x)^t\left(\frac{1}{1-\lambda}dP_T - \frac{\lambda}{1-\lambda}dP_1\right)(x|\theta)}\text{ is s.p.d.}\right\}; \label{eqn:PhiPlus}\\
\Phi_n^+ & = & \left\{\phi: \quad \frac{1}{n}\frac{1}{1-\lambda}\sum_{i=1}^n{g(X_i)g(X_i)^t} - \frac{\lambda}{1-\lambda}\int{g(x)g(x)^tdP_1(x|\theta)} \text{ is s.p.d.} \right\} \label{eqn:PhiPlusn}.
\end{eqnarray}
These two sets are the feasible sets of parameters for the optimization problems (\ref{eqn:MomentTrueEstimProc}) and (\ref{eqn:MomentEstimProc}) respectively. In other words, outside of the set $\Phi^+$, we have $H(\phi,\xi(\phi))=\infty$. Similarly, outside of the set $\Phi_n^+$, we have $H_n(\phi,\xi_n(\phi))=\infty$.
\begin{theorem}
\label{theo:AsymptotNormalMomConstr}
Suppose that:
\begin{enumerate}
\item $\hat{\phi}$ is consistent and $\phi^*\in\text{int}\left(\Phi\right)$;
\item the function $\alpha\mapsto m(\alpha)$ is $\mathcal{C}^2$;
\item $\forall \phi\in B(\phi^*,\tilde{r})$ and any $\xi\in\xi(B(\phi^*,\tilde{r}))$, there exist functions $h_{1,1},h_{1,2}\in L^1(p_1(.|\theta))$ such that $\left\|\psi'\left(\xi^t g(x)\right)g(x)\right\|\leq h_{1,1}(x)$ and $\left\|\psi''\left(\xi^t g(x)\right)g(x)g(x)^t\right\|\leq h_{1,2}(x)$;
\item $\forall \xi\in\xi(B(\phi^*,\tilde{r}))$, there exist functions $h_{2,1},h_{2,2}\in L^1(dx)$ such that $\left\|\psi\left(\xi^t g(x)\right)\nabla_{\theta}p_1(x|\theta)\right\|\leq h_2(x)$ and $\left\|\psi\left(\xi^t g(x)\right)J_{p_1(.|\theta)}\right\|\leq h_2(x)$;
\item for any couple $(\phi,\xi)\in B(\phi^*,\tilde{r})\times\xi(B(\phi^*,\tilde{r}))$, there exists a function $h_3\in L^1(dx)$ such that $\left\|\psi'\left(\xi^t g(x)\right)g(x)\nabla_{\theta}p_1(x|\theta)^t\right\|\leq h_3(x)$;
\item finite second order moment of $g$ under $P_T$, i.e. $\mathbb{E}_{P_T}\left[g_i(X)g_j(X)\right]<\infty$ for $i,j\leq\ell$;
\item matrices $J_{\xi^*,\xi^*}$ and $J_{\phi^*,\xi^*}^t J_{\xi^*,\xi^*} J_{\phi^*,\xi^*}$ are invertible,
\end{enumerate}
then
\[\left(\begin{array}{c}  \sqrt{n}\left(\hat{\phi}-\phi^*\right) \\ \sqrt{n}\xi_n(\hat{\phi})\end{array}\right) \xrightarrow[\mathcal{L}]{} \mathcal{N}\left(0,\frac{1}{(1-\lambda^*)^2}\left(\begin{array}{c}H \\ W\end{array}\right) \text{Var}_{P_T}(g(X)) \left(H^t\quad W^t\right)\right),\]
where $H$ and $P$ are given by formulas (\ref{eqn:NormalAsymMomH}) and (\ref{eqn:NormalAsymMomP}).
\end{theorem}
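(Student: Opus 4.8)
The plan is to convert the double-optimisation characterisation of $\hat{\phi}$ into a joint system of estimating equations for the pair $(\hat{\phi},\hat{\xi})$, with $\hat{\xi}:=\xi_n(\hat{\phi})$, and then run a classical $Z$-estimator (Taylor / mean-value) argument. First I would record the two stationarity conditions. The inner maximisation gives $\nabla_\xi H_n(\hat{\phi},\hat{\xi})=0$ by definition of $\xi_n$. For the outer one, consistency together with $\phi^*\in\text{int}(\Phi)$ and Proposition~\ref{prop:ContinDiffxiMom} (which places $\phi^*$ in the interior of the open set $\Phi^+$ and makes $\phi\mapsto\xi(\phi)$ of class $\mathcal{C}^1$) ensures that $\hat{\phi}$ is eventually an interior minimiser of $\phi\mapsto H_n(\phi,\xi_n(\phi))$; differentiating and using $\nabla_\xi H_n=0$ to cancel the chain-rule term yields $\nabla_\phi H_n(\hat{\phi},\hat{\xi})=0$. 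Thus $(\hat{\phi},\hat{\xi})$ solves $\Psi_n(\hat{\phi},\hat{\xi})=0$ with $\Psi_n=(\nabla_\phi H_n,\nabla_\xi H_n)$.

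The decisive observation is that $\xi^*:=\xi(\phi^*)=0$. Indeed, at $\phi^*$ the measure $P_0^*=\frac{1}{1-\lambda^*}P_T-\frac{\lambda^*}{1-\lambda^*}P_1(.|\theta^*)$ lies in $\mathcal{M}_{\alpha^*}$, so the constraints hold exactly and the Fenchel multiplier is null; concretely $\nabla_\xi H(\phi^*,0)=m(\alpha^*)-\int g\,dP_0^*=0$ using $\psi'(0)=1$. Moreover $\psi(0)=0$ forces $H(\cdot,0)\equiv 0$ and $H_n(\cdot,0)\equiv 0$ as functions of $\phi$, which I would exploit twice: the block $\nabla^2_{\phi\phi}H(\phi^*,0)$ vanishes, and the outer score $\nabla_\phi H_n(\phi^*,0)$ vanishes identically. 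A mean-value/Taylor expansion of $\Psi_n$ about $(\phi^*,0)$ then reads
\[0=\binom{0}{\nabla_\xi H_n(\phi^*,0)}+\mathbb{J}_n\binom{\hat{\phi}-\phi^*}{\hat{\xi}}+o_P\!\left(\left\|(\hat{\phi}-\phi^*,\hat{\xi})\right\|\right),\]
where $\mathbb{J}_n$ is the empirical Hessian of $H_n$, whose population limit has the saddle block form $\mathbb{J}=\left(\begin{smallmatrix}0 & J_{\phi^*,\xi^*}^t\\ J_{\phi^*,\xi^*} & -J_{\xi^*,\xi^*}\end{smallmatrix}\right)$, obtained by differentiating under the integral sign and evaluating at $\xi^*=0$ with $\psi''(0)=1$ (these are exactly formulas (\ref{eqn:NormalAsymMomJ1}) and (\ref{eqn:NormalAsymMomJ2})).

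The only surviving stochastic term is therefore the inner score at $(\phi^*,0)$,
\[\nabla_\xi H_n(\phi^*,0)=m(\alpha^*)-\frac{1}{1-\lambda^*}\frac{1}{n}\sum_{i=1}^n g(X_i)+\frac{\lambda^*}{1-\lambda^*}\mathbb{E}_{P_1(.|\theta^*)}[g]=-\frac{1}{1-\lambda^*}\left(P_n-P_T\right)g,\]
a centred empirical mean, so the multivariate CLT gives $\sqrt{n}\,\nabla_\xi H_n(\phi^*,0)\xrightarrow{\mathcal{L}}\mathcal{N}\!\left(0,\frac{1}{(1-\lambda^*)^2}\text{Var}_{P_T}(g(X))\right)$ under the second-moment hypothesis~6. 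Solving the linearised system by multiplying through by $-\mathbb{J}^{-1}$, I would invert the block matrix via a Schur-complement computation (legitimate once assumption~7 makes $J_{\xi^*,\xi^*}$ and the Schur complement invertible); reading off the $\phi$- and $\xi$-components of $-\mathbb{J}^{-1}(0,\nabla_\xi H_n(\phi^*,0))^t$ produces exactly the matrices $H$ and $W$ of (\ref{eqn:NormalAsymMomH}) and (\ref{eqn:NormalAsymMomP}). A final Slutsky / continuous-mapping step, together with $\hat{\xi}\xrightarrow{P}0$ (from Lemma~\ref{lem:SupXiPhiDiff}, consistency of $\hat{\phi}$, and continuity of $\xi(\cdot)$ at $\phi^*$), delivers the joint normal limit with the stated sandwich covariance.

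The differentiations under the integral sign are routine, justified by the domination hypotheses~3--5, which also give continuity of the population Hessian blocks near $(\phi^*,0)$. I expect the main obstacle to be closing the expansion, namely showing that $\mathbb{J}_n$ evaluated along the random segment joining $(\hat{\phi},\hat{\xi})$ to $(\phi^*,0)$ converges in probability to the nonrandom $\mathbb{J}$. This needs a uniform law of large numbers for the second-order terms $\psi''(\xi^t g)\,g g^t$ over a shrinking neighbourhood, which I would obtain from the Glivenko--Cantelli class of Corollary~\ref{cor:GlivenkoCantelliClass} restricted to $\xi\in\xi(\bar{\mathcal{V}})$, combined with the envelope bounds~3, and then use the established consistency ($\hat{\phi}\to\phi^*$, $\hat{\xi}\to 0$) to guarantee that the intermediate point eventually lies in this neighbourhood. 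Once $\mathbb{J}_n\to\mathbb{J}$ is secured and $\mathbb{J}$ is invertible, the remainder is $o_P$ of the parameter increment, the expansion can be solved for $\sqrt{n}(\hat{\phi}-\phi^*,\hat{\xi})$, and the result follows.
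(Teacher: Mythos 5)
Your proposal is correct and follows essentially the same route as the paper's proof (which itself follows Theorem 3.2 of Newey--Smith): the same pair of estimating equations $\nabla_\xi H_n(\hat{\phi},\xi_n(\hat{\phi}))=0$ and $\nabla_\phi H_n(\hat{\phi},\xi_n(\hat{\phi}))=0$, the same observation that $\xi(\phi^*)=0$ kills the outer score and the $\phi\phi$-block, the same mean-value expansion around $(\phi^*,0)$ with the CLT applied to $\nabla_\xi H_n(\phi^*,0)=-\frac{1}{1-\lambda^*}(P_n-P_T)g$, and the same block inversion yielding $H$ and $W$. Your explicit treatment of the convergence of the Hessian along the random segment via the Glivenko--Cantelli class is in fact slightly more careful than the paper, which asserts this limit under its regularity assumptions without elaboration.
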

The proof is differed to Appendix \ref{AppendSemiPara:Theo3}. Assumption 3 entails the differentiability of function $H_n(\xi,\phi)$ up to second order with respect to $\xi$ whatever the value of $\phi$ in a neighborhood of $\phi^*$. Assumption 4 entails the differentiability of function $H_n(\xi,\phi)$ up to second order with respect to $\theta$ in a neighborhood of $\theta^*$ inside $\xi(B(\phi^*,\tilde{r}))$. Finally, assumption 5 implies the cross-differentiability of function $H_n(\xi,\phi)$ with respect to $\xi$ and $\theta$.\\ 
Differentiability assumptions in Theorem \ref{theo:AsymptotNormalMomConstr} can be relaxed in the case of the Pearson's $\chi^2$ since all integrals in functions $H_n$ and $H$ can be calculated. Our result covers the general case and thus we need to ensure differentiability of the integrals using Lebesgue theorems which requires the existence of integrable functions which upperbound the integrands.
%\begin{remark}
%It is important to notice that the variance of the estimator becomes higher as the proportion of the parametric part becomes higher.
%\end{remark}

%%%%%%%%%%%%%%%%%%%%%%%%%%%%%%%%%%%%%%%%%%%%%%%%%%%%%%%%%%%%%%%%%%%%%%%%%%%%%%%%%
%
%==============================================================
%%%%%%%%%%%%%%%%%%%%%%%%%%%%%%%%%%%%%%%%%%%%%%%%%%%%%%%%%%%%%%%%%%%%%%%%%%%%
%==============================================================
%
%%%%%%%%%%%%%%%%%%%%%%%%%%%%%%%%%%%%%%%%%%%%%%%%%%%%%%%%%%%%%%%%%%%%%%%%%%%%%%%%%

\section{Simulation study}\label{sec:SemiParaSimulations}
We perform several simulations in univariate and multivariate situations and show how prior information about the moments of the distribution of the semiparametric component $P_0$ can help us better estimate the set of parameters $(\lambda^*,\theta^*,\alpha^*)$ in regular examples, i.e. the components of the mixture can be clearly distinguished when we plot the probability density function. We also show how our approach permits to estimate even in difficult situations when the proportion of the parametric component is very low; such cases could \emph{not} be estimated using existing methods.\\
Another important problem in existing methods is their quadratic complexity. For example, an EM-type method such as \cite{Robin}'s algorithm or its stochastic version introduced by \cite{BordesStochEM} performs
%needs at each iteration to calculate first a weighted kernel density estimator and calculate it at each observation of the sample. This calculus has a complexity of order $n^2$. We then need to calculate a vector of weights of length $n$, and estimate the proportion by averaging this vector of weights. Finally, we need to estimate the parameters of $P_1$ by maximum likelihood which can be done at the best cases by averaging $n$ terms. This means that such an algorithm needs to do at least 
$n^2+3n$ operations in order to complete a single iteration. An EM-type algorithm for semiparametric mixture models needs in average 100 iterations to converge and may attain 1000 iterations\footnote{This was the case of the Weibull mixture.} for each sample. To conclude, the estimation procedure performs at least $100(n^2+3n)$ operations. In a signal-noise situations where the signal has a very low proportion around $0.05$, we need a greater number of observations say $n=10^5$. Such experiences cannot be performed using an EM-type method such as the one in \cite{Robin} or its stochastic version introduced in \cite{BordesStochEM} unless one has a "super computer". The symmetry method in \cite{Bordes10} shares similar complexity\footnote{we need more than 24 hours to estimate the parameters of one sample with $10^5$ observations.} $\mathcal{O}(n^2)$.
% because one needs to calculate a cumulative density estimator on each observation, and thus a complexity $n^2$. There is then the optimization step which needs at least 100 iterations to converge\footnote{we need more than 24 hours to estimate the parameters of one sample with $10^5$ observations.}.
Last but not least, the EM-type method of \cite{Song} and their $\pi-$maximizing one have the advantage over other methods, because we need only to calculate a kernel density estimator once and for all, then use it at each iteration\footnote{We were able to perform simulations with $n=10^5$ observations but needed about 5 days on an i7 laptop clocked at 2.5 GHz with 8GB of RAM.}. Nevertheless, the method has still a complexity of order $n^2$.\\
Our approach, although has a double optimization procedure, it can be implemented when $g$ is polynomial and $\varphi$ corresponds to the Pearson's $\chi^2$ in a way that it has a linear complexity $n$. First of all, using the $\chi^2$ divergence, the optimization over $\xi$ in (\ref{eqn:MomentEstimProc}) can be calculated directly. On the other hand, all integrals are mere calculus of empirical moments and moments of the parametric part, see Example \ref{ex:Chi2Consistency}. Empirical moments can be calculated once and for all whereas moments of the parametric part can be calculated using direct formulas available for a large class of probability distributions. What remains is the optimization over $\phi$. In the simulations below, our method produced the estimates instantly even for a number of observations of order $10^7$ whereas other existing methods needed from several hours (algorithms of \cite{Song}) to several days (for other algorithms).\\
Because of the very long execution time of existing methods, we restricted the comparison to simulations in regular situations with $n<10^4$. Experiments with greater number of observations were only treated using our method and the methods in \cite{Song} (for $n\leq 10^5$). In all tables presented hereafter, we performed 100 experiments and calculated the average of resulting estimators. We provided also the standard deviation of the 100 experiments in order to get a measure of preference in case the different estimation methods gave close results.\\ 
In our experiments, the datasets were generated by the following mixtures
\begin{itemize}
\item[$\bullet$] A two-component Weibull mixture;
%\item[$\bullet$] A two-component Weibull - Lognormal mixture;
\item[$\bullet$] A two-component Gaussian -- Two-sided Weibull mixture;
\item[$\bullet$] A two-component bivariate Gaussian mixture.
\end{itemize}
We have chosen a variety of values for the parameters especially the proportion. The second model stems from a signal-noise application where the signal is centered at zero whereas the noise is repartitioned at both sides. The third model appears in clustering and is only presented to show how our method performs in multivariate contexts.\\
Simulations were done using the R tool \cite{Rtool}. Optimization was performed using the Nelder-Mead algorithm, see \cite{NelderMead}. For the $\pi-$maximizing algorithm of \cite{Song}, we used the Brent's method because the optimization was carried over one parameter.\\ 
For our procedure, we only used the $\chi^2$ divergence\footnote{We noticed no great difference when using a Hellinger divergence.}, because our method can be implemented efficiently, see Example \ref{ex:Chi2Consistency}. Recall that the optimized function over $\xi$ is not always strictly concave and the Hessian matrix may be definite positive, see remark \ref{rem:PhiPlusJacobMat}. It is thus important to check for each vector $\phi=(\lambda,\theta,\alpha)$ if the Hessian matrix is still definite negative for example using Sylvester's criterion. If it is not, we set the objective function to a value such as $10^2$ which is high enough since the objective function should have its minimum near zero. Besides, since the resulting function $\phi\mapsto H_n(\phi,\xi_n(\phi))$ as a function of $\phi$ is not ensured to be strictly convex, we used 10 random initial feasible points inside the set $\Phi_n^+$ defined by (\ref{eqn:PhiPlusn}). We then ran the Nelder-Mead algorithm and chose the vector of parameters for which the objective function has the lowest value. We applied a similar procedure on the algorithm of \cite{Bordes10} in order to ensure a \emph{good and fair} optimization.
\begin{remark}
In the literature on the stochastic EM algorithm, it is advised that we iterate the algorithm for some time until it reaches a stable state, then continue iterating long enough and average the values obtained in the second part. The trajectories of the algorithm were very erratic especially for the estimation of the proportion. For us, we iterated for the stochastic EM-type algorithm of \cite{BordesStochEM} 5000 times and averaged the 4000 final iterations.
\end{remark}
\begin{remark}
Initialization of both the EM-type algorithm introduced in \cite{Song} and the SEM-type algorithm introduced in \cite{BordesStochEM} was not very important, and we got the same results when the vector of weights was initialized uniformly or in a "good" way. The EM-type method presented in \cite{Robin} was more influenced by such initialization and we used most of the time a good starting points. In the paper \cite{Robin}, the authors mention that their EM-type algorithm has a fixed point with a proportion at 0 or 1. This confirms that a good initialization is needed in order to avoid theses fixed points.
\end{remark}
\begin{remark}
For the methods introduced in \cite{Song}, we need to estimate mixture's distribution using a kernel density estimator. For the data generated from a Weibull mixture and the data generated from a Weibull Lognormal mixture, we used a reciprocal inverse Gaussian kernel density estimator with a window equal to 0.01 according to the simulation study in \cite{Diaa}. For the symmetry method presented in \cite{Bordes10}, we used a triangular kernel which gave better results than the use of a Gaussian kernel.
\end{remark}
\begin{remark}
Matrix inversion was done manually using direct inversion methods, because the function \texttt{solve} in the statistical program R produced errors sometimes because the matrix was highly sensible at some point during the optimization. For matrices of dimension $4\times 4$ and $5\times 5$ we used block matrix inversion, see for example \cite{BlockMat}. The inverse of a $3\times 3$ was calculated using a direct formula.
\end{remark}

%%%%%%%%%%%%%%%%%%%%%%%%%%%%%%%%%%%%%%%%%%%%%%%
\subsection{Data generated from a two-component Weibull mixture modeled by a semiparametric Weibull mixture}
We consider a mixture of two Weibull components with scales $\sigma_1 = 0.5,\sigma_2=1$ and shapes $\nu_1=2,\nu_2=1$ in order to generate the dataset. In the semiparametric mixture model, the parametric component will be "the one to the right", i.e. the component whose true set of parameters is $(\nu_1=2,\sigma_1=0.5)$. We illustrate several values of the proportion $\lambda\in\{0.7,0.3\}$. This constitutes a difficult example for both our method and existing methods such as EM-type methods or the $\pi-$maximizing algorithm of \cite{Song}. We therefore, simulated 10000-samples and fixed both scales during estimation. We estimate the proportion and the shapes of both components. For our method, the variance of the estimator of $\nu_1$ was high and we needed to use 4 moments to reduce it to an acceptable range. Of course, as the number of observations increases, the variance reduces. We, however, avoided greater number of observations because methods such as \cite{Robin} need very long execution time for even one sample. The method presented in \cite{Bordes10} which is based on a symmetry assumption cannot be applied here since the support of the mixture is $\mathbb{R}_+$.\\
The moments of the Weibull distribution are given by
\[\mathbb{E}[X^i] = \sigma^i\Gamma(1+i/\nu),\qquad\forall i\in\mathbb{N}.\]
The results of our method are clearly better than existing methods which practically failed and could not see but one main component with shape in between the two shapes, see Table \ref{tab:3by3ResultsWeibullMoment}. Although our method presents an inconvenient greater variance for $\nu_1$, the Monte-Carlo mean of the hundred experiences is still unbiased. We believe that the use of other types of constraints would have resulted in better results without the need to add one more constraint. 
\begin{table}[ht]
\centering
\begin{tabular}{|c|c|c|c|c|c|c|}
\hline
Nb of observations & $\lambda$ & sd$(\lambda)$ & $\nu_1$ & sd($\nu_1$) & $\nu_2$ & sd($\nu_2$)\\
\hline
\hline
\multicolumn{7}{|c|}{Mixture 1 : $n=10^4$ $\lambda^* = 0.7$, $\nu_1^*=2$, $\sigma_1^*=0.5$(fixed), $\nu_2^*=1$, $\sigma_2^*=1$(fixed) }\\
\hline
Pearson's $\chi^2$ 3 moments & 0.700 & 0.010 & 2.006 & 0.217 & 1.005 & 0.024 \\
Pearson's $\chi^2$ 4 moments & 0.701 & 0.010 & 2.014 & 0.086 & 1.013 & 0.024 \\
Robin et al. \cite{Robin} & 0.654 & 0.101 & 1.591 & 0.085 & --- & --- \\
Song et al. EM-type \cite{Song} & 0.907 & 0.004 & 1.675 & 0.020 & --- & --- \\
Song et al. $\pi-$maximizing \cite{Song} & 0.782 & 0.006 & 1.443 & 0.012 & --- & --- \\
\hline
\hline
\multicolumn{7}{|c|}{Mixture 1 : $n=10^4$ $\lambda^* = 0.3$, $\nu_1^*=2$, $\sigma_1^*=0.5$(fixed), $\nu_2^*=1$, $\sigma_2^*=1$(fixed) }\\
\hline
Pearson's $\chi^2$ 3 moments & 0.304 & 0.016 & 2.191 & 0.887 & 0.998 & 0.013 \\
Pearson's $\chi^2$ 4 moments & 0.303 & 0.016 & 2.120 & 0.285 & 1.001 & 0.013 \\
Robin et al. \cite{Robin} & 0.604 & 0.029 & 1.256  & 0.037 & --- & --- \\
Song et al. EM-type \cite{Song} & 0.806 & 0.005 & 1.185 & 0.018 & --- & --- \\
Song et al. $\pi-$maximizing \cite{Song} & 0.624 & 0.007 & 1.312 & 0.013 & --- & --- \\
\hline
\end{tabular}
\caption{The mean value with the standard deviation of estimates in a 100-run experiment on a two-component Weibull mixture.}
\label{tab:3by3ResultsWeibullMoment}
\end{table}

\subsection{Data generated from a two-sided Weibull Gaussian mixture modeled by a semiparametric two-sided Weibull Gaussian mixture}\label{subsec:TwoSidGaussMom}
The (symmetric) two-sided Weibull distribution can be considered as a generalization of the Laplace distribution and can be defined through either its density or its distribution function as follows:
\[f(x|\nu,\sigma) = \frac{1}{2}\frac{\sigma}{\nu}\left(\frac{|x|}{\sigma}\right)^{\nu-1}e^{-\left(\frac{|x|}{\sigma}\right)^{\nu}}, \qquad \mathbb{F}(x|\nu,\sigma) = \left\{ \begin{array}{cc} 1-\frac{1}{2}e^{-\left(\frac{x}{\sigma}\right)^{\nu}} & x\geq 0 \\ 
e^{-\left(\frac{-x}{\sigma}\right)^{\nu}} & x< 0\end{array} \right.\]
We can also define a skewed form of the two-sided Weibull distribution by attributing different scale and shape parameters to the positive and the negative parts, and then normalizing in a suitable way so that $f(x)$ integrates to one; see \cite{Chen2sideWeibull}. The moments of the symmetric two-sided Weibull distribution we consider here are given by
\begin{eqnarray*}
\mathbb{E}[X^{2k}] & = & \sigma^{2k}\Gamma(1+2k/\nu) \\
\mathbb{E}[X^{2k+1}] & = & 0, \forall k\in\mathbb{N}.
\end{eqnarray*}
%\begin{figure}[h]
%\centering
%\includegraphics[scale=0.4]{TwoSidedWeibullGaussMixures.png}
%\caption{Mixtures of two-sided Weibull -- Gaussian with low and high proportion of the parametric part. See table (\ref{tab:3by3ResultsTwoSideWeibullGaussMom})}
%\label{fig:TwoSidedWeibullGaussMixure}
%\end{figure}

\noindent We simulate different samples from a two-component mixture with a parametric component $f_1$ a Gaussian $\mathcal{N}(\mu=0,\sigma=0.5)$ and a semiparametric component $f_0$ a (symmetric) two-sided Weibull distribution with parameters $\nu\in\{3,2.5,1.5\}$ and a scale $\sigma_0\in\{1.5,2\}$. We perform different experiments to estimate the proportion and the mean of the parametric part (the Gaussian) and the shape of the semiparametric component. The values of the scale of the two components are considered to be known during estimation. We consider the following two sets of constraints:
\begin{eqnarray*}
\mathcal{M}_{1:3} & = & \left\{f_0:\int_{\mathbb{R}}{f_0(x)dx}=1,\mathbb{E}_{f_0}[X] = 0, \mathbb{E}_{f_0}[X^2]=\sigma_0^2\Gamma(1+2/\nu), \mathbb{E}_{f_0}[X^3]=0, \nu>0\right\}; \\
\mathcal{M}_{2:4} & = & \left\{f_0: \int_{\mathbb{R}}{f_0(x)dx}=1,\mathbb{E}_{f_0}[X^2]=\sigma_0^2\Gamma(1+2/\nu), \mathbb{E}_{f_0}[X^3]=0,\right. \\
&  & \text{\hspace{8cm}} \left.  \mathbb{E}_{f_0}[X^4]=\sigma_0^4\Gamma(1+4/\nu), \nu>0\right\}.
\end{eqnarray*}
The first set of constraints is not really suitable for estimation especially when the number of observations is high enough. The reason is simple and is based on the original idea behind our procedure, see paragraph \ref{subsec:procIntrod}. The first and the third moment constraints are practically the same constraint. Indeed, the number of models of the form $\frac{1}{1-\lambda}f(.) - \frac{\lambda}{1-\lambda}f_1(x|\theta)$ verifying the constraints of $\mathcal{M}_{1:3}$ is infinite because the first and the third constraints give rise to the following equations:
\begin{eqnarray*}
\lambda\mu & = & 0 \\
\lambda\mu\left(\mu^2+3\sigma_1^2\right) & = & 0.
\end{eqnarray*}
The zero in the right hand side comes from the fact that the first and the third true moments of the whole mixture are zero. These are two equations in $\lambda$ and $\mu$ (since $\sigma_1$ is supposed to be known) with infinite number of solutions $(\mu,\lambda)\in\{0\}\times[0,1]$. This entails that theoretically, there is an infinite number of models of the form $\frac{1}{1-\lambda}f(.) - \frac{\lambda}{1-\lambda}f_1(x|\theta)$ in the intersection $\mathcal{N}\cap M_{1:3}$. Still, the empirical version of these equations is
\begin{eqnarray*}
\lambda\mu & = & \frac{1}{n}\sum_{i=1}^n{X_i}; \\
\lambda\mu\left(\mu^2+3\sigma_1^2\right) & = & \frac{1}{n}\sum_{i=1}^n{X_i^3}.
\end{eqnarray*}
As the number of observations is very small, the right hand side of both equations is biased enough from zero and it is highly possible that the number of solutions becomes not only finite but reduced to one. As the number of observations increases, the law of large numbers implies directly that the right hand side becomes arbitrarily close to zero and the set of solutions becomes infinite. This is exactly what happened in the simulation results in Table \ref{tab:3by3ResultsTwoSideWeibullGaussMom} below. The algorithm favored the value zero for the estimate of the proportion as the true proportion of the parametric component became close to zero, whereas the estimates of the mean took values very dispersed centered around zero but with a high standard deviation. The set of constraints $\mathcal{M}_{2:4}$ gave clear better results even for very low proportions. On the other hand, our method outperforms other semiparametric algorithms without prior information especially when the proportion of the parametric component is low. This shows once more the interest of incorporating a prior information in the estimation procedure.
\begin{table}[ht]
\centering
\resizebox{\textwidth}{!}{ 
\begin{tabular}{|c|c|c|c|c|c|c|}
\hline
Estimation method & $\lambda$ & sd$(\lambda)$ & $\mu$ & sd($\mu$) & $\nu$ & sd($\nu$)\\
\hline
\hline
\multicolumn{7}{|c|}{Mixture 1 :$n=100$ $\lambda^* = 0.7$, $\mu^*=0$, $\sigma_2^*=0.5$(fixed), $\nu^*=3$, $\sigma_1^*=1.5$(fixed) }\\
\hline
Pearson's $\chi^2$ under $\mathcal{M}_{1:3}$ & 0.713 & 0.064 & -0.0003 & 0.085 & 4.315  & 0.118 \\
Pearson's $\chi^2$ under $\mathcal{M}_{2:4}$ & 0.764 & 0.067 & -0.012 & 0.342 & 2.893  & 0.731 \\
Bordes et al. symmetry Triangular Kernel \cite{Bordes10} & 0.309 & 0.226 & 0.240 & 0.609 & $\mu_2=-$0.220 & sd$(\mu_2)=$0.398 \\
Bordes et al. symmetry Gaussian Kernel \cite{Bordes10} & 0.211 & 0.133 & 0.106 & 0.533 & $\mu_2=-$0.035 & sd$(\mu_2)=$0.203 \\
Robin et al. \cite{Robin} & 0.488 & 0.137 & -0.005 & 0.114 & --- & --- \\
EM-type Song et al. \cite{Song} & 0.762 & 0.040 & -0.005 & 0.092 & --- & --- \\
$\pi-$maximizing Song et al. \cite{Song} & 0.717 & 0.156 & -0.161 & 2.301 & --- & --- \\
Stochastic EM \cite{BordesStochEM} & 0.539 & 0.083 & -0.005 & 0.112 & --- & --- \\
\hline
\hline
\multicolumn{7}{|c|}{Mixture 2 :$n=100$ $\lambda^* = 0.3$, $\mu^*=0$, $\sigma_2^*=0.5$(fixed), $\nu^*=3$, $\sigma_1^*=1.5$(fixed) }\\
\hline
Pearson's $\chi^2$ under $\mathcal{M}_{1:3}$ & 0.333 & 0.079 & 0.001 & 0.316 & 4.243  & 0.442 \\
Pearson's $\chi^2$ under $\mathcal{M}_{2:4}$ & 0.407 & 0.077 & 0.012 & 0.575 & 2.925  & 0.454 \\
Bordes et al. symmetry Triangular Kernel \cite{Bordes10} & 0.272 & 0.119 & 0.773 & 0.947 & $\mu_2=-$0.430 & sd$(\mu_2)=$0.393 \\
Bordes et al. symmetry Gaussian Kernel \cite{Bordes10} & 0.206 & 0.104 & 0.855 & 0.911 & $\mu_2=-$0.308 & sd$(\mu_2)=$0.350 \\
Robin et al. \cite{Robin} & 0.203 & 0.078 & -0.109 & 0.947 & --- & --- \\
EM-type Song et al. \cite{Song} & 0.494 & 0.035 & -0.132 & 0.806 & --- & --- \\
$\pi-$maximizing Song et al. \cite{Song} & 0.384 & 0.129 & 0.014 & 1.321 & --- & --- \\
Stochastic EM \cite{BordesStochEM} & 0.263 & 0.040 & -0.062 & 0.646 & --- & --- \\
\hline
\hline
\multicolumn{7}{|c|}{Mixture 6 :$n=10^5$ $\lambda^* = 0.05$, $\mu^*=0$, $\sigma_2^*=0.5$(fixed), $\nu^*=1.5$, $\sigma_1^*=2$(fixed) }\\
\hline
Pearson's $\chi^2$ under $\mathcal{M}_{1:3}$ & 0.005 & 0.033 & -0.105 & 2.693 & 1.581  & 0.056 \\
Pearson's $\chi^2$ under $\mathcal{M}_{2:4}$ & 0.066 & 0.013 & -0.036 & 0.857 & 1.493 & 0.008 \\
EM-type Song et al.\cite{Song} & 0.304 & 0.014 & -0.030 & 0.910 & --- & --- \\
$\pi-$maximizing Song et al. \cite{Song} & 0.231 & 0.002 & 0.017 & 0.801 & --- & ---\\
\hline
\hline
\multicolumn{7}{|c|}{Mixture 7 :$n=10^7$ $\lambda^* = 0.05$, $\mu^*=0$, $\sigma_2^*=0.5$(fixed), $\nu^*=1.5$, $\sigma_1^*=2$(fixed) }\\
\hline
Pearson's $\chi^2$ under $\mathcal{M}_{1:3}$ & 0.006 & 0.010 & 0.024 & 0.197 & 1.500  & 0.019 \\
Pearson's $\chi^2$ under $\mathcal{M}_{2:4}$ & 0.051 & 0.001 & 0.002 & 0.259 & 1.500  & 0.001 \\
\hline
\hline
\multicolumn{7}{|c|}{Mixture 8 :$n=10^7$ $\lambda^* = 0.01$, $\mu^*=0$, $\sigma_2^*=0.5$(fixed), $\nu^*=1.5$, $\sigma_1^*=2$(fixed) }\\
\hline
Pearson's $\chi^2$ under $\mathcal{M}_{1:3}$ & 0.005 & 0.002 & -0.011 & 0.162 & 1.509  & 0.004 \\
Pearson's $\chi^2$ under $\mathcal{M}_{2:4}$ & 0.011 & 0.001 & -0.013 & 0.594 & 1.499  & 0.001 \\
\hline
\end{tabular}}
\caption{The mean value with the standard deviation of estimates in a 100-run experiment on a two-component two-sided Weibull--Gaussian mixture.}
\label{tab:3by3ResultsTwoSideWeibullGaussMom}
\end{table}

%%%%%%%%%%%%%%%%%%%%%%%%%%%%%%%%%%
\subsection{Data generated from a bivariate Gaussian mixture and modeled by a semiparametric bivariate Gaussian mixture}
We generate 1000 i.i.d. observations from a bivariate Gaussian mixture with proportion $\lambda=0.7$ for the parametric component. The parametric component is a bivariate Gaussian with mean $(0,-1)$ and covariance matrix $I_2$. The unknown component is a bivariate Gaussian with mean $(3,3)$ and covariance matrix:
\[\Sigma_2 = \left(\begin{array}{cc}{\sigma_2^*}^2 & \rho^* \\ \rho^* & {\sigma_2^*}^2\end{array}\right),\qquad {\sigma_2^*}^2 = 0.5,\quad \rho^* \in\{0, 0.25\}.\]
In a first experiment, we suppose that we know the whole parametric component, and that the unknown component belongs to the set $\mathcal{M}_1$
\begin{equation*}
\mathcal{M}_1 = \left\{\int_{\mathbb{R}^2}{f_0(x,y)dxdy}=1,\quad \int_{\mathbb{R}^2}{xf_0(x,y)dxdy}=\int_{\mathbb{R}^2}{yf_0(x,y)dydx}=\theta,\quad \theta\in\mathbb{R}\right\}.
\end{equation*}
We suppose that the only unknown parameters are the center of the unknown cluster $(\theta,\theta)$ and the proportion of the parametric component.\\
In a second experiment, we suppose that the center of the parametric component is unknown but given by $(\mu,\mu-1)$ for some unknown $\mu\in\mathbb{R}$. The set of constraints is now replaced with $\mathcal{M}_2$ given by
\begin{eqnarray*}
\mathcal{M}_2 & = &  \left\{\int_{\mathbb{R}^2}{f_0(x,y)dxdy}=1,\quad \int_{\mathbb{R}^2}{xf_0(x,y)dxdy}=\int_{\mathbb{R}^2}{yf_0(x,y)dydx}=\theta, \right.\\
 &  & \left.\qquad \qquad \qquad \int_{\mathbb{R}^2}{xyf_0(x,y)dxdy}=\theta^2+\rho^*,\theta\in\mathbb{R}\right\}.
\end{eqnarray*}
The covariance between the two coordinates $\rho^*$ in the unknown component is supposed to be known. We tested two values for $\rho^*= 0$ and $\rho^*=0.25$.\\
Although existing methods were only proposed for univariate cases, we see no problem in using them in multivariate cases without any changes. The only method which cannot be used directly is the method of \cite{Bordes10} because it is based on the symmetry of the density function, so it remained out of the competition.\\
For methods which use a kernel estimator, we used a kernel estimator for each coordinate of the random observations, i.e. $K_{w_x,w_y}(x,y) = K_{w_x}(x)K_{w_y}(y)$. The EM-type algorithm of \cite{Song} performs as good as our algorithm. The SEM algorithm of \cite{BordesStochEM} gives also good results. The algorithm of \cite{Robin} and the $\pi-$maximizing algorithm of \cite{Song} failed to give satisfactory results.
%\begin{figure}[h]
%\centering
%\includegraphics[scale=0.35]{BivariateGaussMix.png}
%\caption{The two bivariate Gaussian mixtures.}
%\label{fig:BivariateGaussMix}
%\end{figure}

\begin{table}[h]
\centering
\begin{tabular}{|c|c|c|c|c|c|c|}
\hline
Estimation method & $\lambda$ & sd$(\lambda)$ & $\mu$ & sd($\mu$) & $\theta$ & sd($\theta$)\\
\hline
\hline
\multicolumn{7}{|c|}{\bf{Mixture 1} : $\rho^*=0$ and $\mu_1 = (\mu,1-\mu)$ is unknown }\\
\hline
Pearson's $\chi^2$ under $\mathcal{M}_1$ & 0.680 & 0.027 & --- & --- & 2.854 & 0.233 \\
Pearson's $\chi^2$ under $\mathcal{M}_2$ & 0.694 & 0.019 & 0.016 & 0.035 & 3.034 & 0.045 \\
Stochastic EM \cite{BordesStochEM}  & 0.724 & 0.015 & 0.090 & 0.043 & $\mu_{1,2}=$ -0.880 & sd$(\mu_{1,2})=$0.053 \\
Robin et al. \cite{Robin} & 0.954 & 0.064 & 0.779 & 0.212 & $\mu_{1,2}=$ -0.221 & sd$(\mu_{1,2})=$0.218 \\
Song EM-type \cite{Song}  & 0.697 & 0.014 & 0.003 & 0.038 & $\mu_{1,2}=$ -0.996 & sd$(\mu_{1,2})=$0.039 \\
Song $\pi-$maximizing \cite{Song} & 0.114 & 0.297 & 0.538 & 1.810 & $\mu_{1,2}=$ -0.463 & sd$(\mu_{1,2})=$1.810 \\
\hline
\hline
\multicolumn{7}{|c|}{\bf{Mixture 2} : $\rho^*=0.25$ and $\mu_1 = (\mu,1-\mu)$ is unknown }\\
\hline
Pearson's $\chi^2$ under $\mathcal{M}_2$ & 0.704 & 0.026 & 0.033 & 0.060 & 3.071 & 0.101 \\
Stochastic EM \cite{BordesStochEM} & 0.730 & 0.016 & 0.083 & 0.052 & $\mu_{1,2}=$-0.878 & sd$(\mu_{1,2})=$0.055 \\
Robin et al. \cite{Robin} & 0.890 & 0.025 & 0.566 & 0.117 & $\mu_{1,2}=$ -0.434 & sd$(\mu_{1,2})=$0.117 \\
Song EM-type \cite{Song}  & 0.704 & 0.015 & 0.016 & 0.047 & $\mu_{1,2}=$ -0.973 & sd$(\mu_{1,2})=$0.040 \\
Song $\pi-$maximizing \cite{Song} & 0.095 & 0.268 & 0.564 & 1.606 & $\mu_{1,2}=$ -0.436 & sd$(\mu_{1,2})=$1.606 \\
\hline

\end{tabular}
\caption{The mean value with the standard deviation of estimates in a 100-run experiment on a two-component bivariate normal mixture.}
\label{tab:3by3ResultsBivaraiteGaussMom}
\end{table}

% ==========================================================
% -------------------------------------------------
%%%%%%%%%%%%%%%%%%%%%%%%%%%%%%%%%%%%%%%%%%%%%%%%%%%%%%%%%%%%%%%%%%%%%%%
% -------------------------------------------------
% ==========================================================
\section{Conclusions}
In this paper, we proposed a structure for semiparametric two-component mixture models where one component is parametric with unknown parameter, and a component defined by linear constraints on its distribution function. These constraints may be moments constraints for example. We proposed also an algorithm which estimates the parameters of this model and showed how we can implement it efficiently even in multivariate contexts. The algorithm has a linear complexity when we use the Pearson's $\chi^2$ divergence and the constraints are polynomials (thus moments constraints). We provided sufficient conditions in order to prove the consistency and the asymptotic normality of the resulting estimators.\\
Simulations show the gain we have by adding moments constraints in comparison to existing methods which do not consider any prior information. The method give clear good results even if the proportion of the parametric component is very low (equal to 0.01). In signal-noise applications, this can be interpreted otherwise. As long as we are able to estimate with relatively high precision the proportion of the signal (parametric component), we are proving the existence of the signal in a very heavy noise ($99\%$ of the data) even if the position of the signal is not accurately estimated. We showed in a simple example that our model can be applied in multivariate contexts. The new model shows encouraging properties and results, and should be tested on real data in a future work.

\clearpage
% ==========================================================
% -------------------------------------------------
%%%%%%%%%%%%%%%%%%%%%%%%%%%%%%%%%%%%%%%%%%%%%%%%%%%%%%%%%%%%%%%%%%%%%%%
% -------------------------------------------------
% ==========================================================
\section{Appendix: Proofs}
\subsection{Proof of Proposition \ref{prop:identifiabilityMixture}}\label{AppendSemiPara:PropIdenitifiability}
\begin{proof}
Based on equation (\ref{eqn:IdenitifiabilityDefEq}), we may write the corresponding constraints equations, which are a fortiori equal:
\[\lambda \int{g(x)dP_1(x|\theta)} + (1-\lambda)m(\alpha) = \tilde{\lambda} \int{g(x)dP_1(x|\tilde{\theta})} + (1-\tilde{\lambda})m(\tilde{\alpha}).\]
Define the following function:
\[G:\mathbb{R}^d\rightarrow\mathbb{R}^{\ell}: (\lambda,\theta,\alpha)\mapsto \lambda \int{g(x)dP_1(x|\theta)} + (1-\lambda)m(\alpha).\]
The solution to the previous system of equations is now equivalent to the fact that function $G$ is one-to-one. This means that for a fixed $m^*$, we need that the nonlinear system of equations:
\begin{equation}
\frac{1}{1-\lambda} m^* - \frac{\lambda}{1-\lambda}m_1(\theta) = m_0(\alpha)
\label{eqn:IdentSys}
\end{equation}
has a unique solution $(\lambda,\theta,\alpha)$. The value of $m^*$ is given by $\int{g(x)dP_T}$ where $P_T$ is the mixture we are considering. To conclude, suppose that the system (\ref{eqn:IdentSys}) has a unique solution $(\lambda^*,\theta^*,\alpha^*)$ for each given $m^*$, then function $G$ is one-to-one and the constraints equations imply that $\lambda = \tilde{\lambda},\theta = \tilde{\theta}$ and $\alpha=\tilde{\alpha}$. Finally, using (\ref{eqn:IdenitifiabilityDefEq}), we may deduce that $P_0 = \tilde{P}_0$. Thus, the semiparametric mixture model is identifiable as soon as the nonlinear system of equations (\ref{eqn:IdentSys}) has a unique solution $(\lambda^*,\theta^*,\alpha^*)$.
\end{proof}

%%%%%%%%%%%%%%%%%%%%%%%%%%%%%%%%%%%%%%%%%%%%%%%%%%%%%%%
%%%%%%%%%%%%%%%%%%%%%%%%%%%%%%%%%%%%%%%%%%%%%%%%%%%%%%%

\subsection{Proof of Proposition \ref{prop:identifiability}}\label{AppendSemiPara:Prop1}
\begin{proof}
Let $P_0$ be some signed measure which belongs to the intersection $\mathcal{N} \cap \mathcal{M}$. Since $P_0$ belongs to $\mathcal{N}$, there exists a couple $(\lambda,\theta)$ such that:
\begin{equation}
P_0 = \frac{1}{1-\lambda} P_T - \frac{\lambda}{1-\lambda} P_1(.|\theta).
\label{eqn:SetNelement}
\end{equation}
This couple is unique by virtue of assumptions 3 and 4. Indeed, let $(\lambda,\theta)$	and $(\tilde{\lambda},\tilde{\theta})$ be two couples such that:
\begin{equation}
\frac{1}{1-\lambda} P_T - \frac{\lambda}{1-\lambda} P_1(.|\theta) = \frac{1}{1-\tilde{\lambda}} P_T - \frac{\tilde{\lambda}}{1-\tilde{\lambda}} P_1(.|\tilde{\theta})\quad dP_T-a.e.
\label{eqn:identifEquality}
\end{equation}
This entails that:
\[\frac{1}{1-\lambda} - \frac{\lambda}{1-\lambda} \frac{dP_1(x|\theta)}{dP_T(x)} = \frac{1}{1-\tilde{\lambda}} - \frac{\tilde{\lambda}}{1-\tilde{\lambda}} \frac{dP_1(x|\tilde{\theta})}{dP_T(x)}.\]
Taking the limit as $\|x\|$ tends to $\infty$ results in:
\[\frac{1-c\lambda}{1-\lambda}  = \frac{1-\tilde{c}\lambda}{1-\tilde{\lambda}}.\]
Note that function $z\mapsto (1-cz)/(1-z)$ is strictly monotone as long as $c\neq 1$. Hence, it is a one-to-one map. Thus $\lambda=\tilde{\lambda}$. Inserting this result in equation (\ref{eqn:identifEquality}) entails that:
\[P_1(.|\theta) = P_1(.|\tilde{\theta})\qquad dP_T-a.e.\]
Using the identifiability of $P_1$ (assumption 4), we get $\theta=\tilde{\theta}$ which proves the existence of a unique couple $(\lambda,\theta)$ in (\ref{eqn:SetNelement}).\\
On the other hand, since $P_0$ belongs to $\mathcal{M}$, there exists a unique $\alpha$ such that $P_0\in\mathcal{M}_{\alpha}$. Uniqueness comes from the fact that function $\alpha\mapsto m(\alpha)$ is one-to-one (assumption 2). Thus, $P_0$ verifies the constraints
\[\int{dP_0(x)} = 1,\qquad \int{g_i(x)dP_0(x)} = m_i(\alpha),\quad \forall i=1,\cdots,\ell.\]
Combining this with (\ref{eqn:SetNelement}), we get:
\begin{equation}
\int{\left(\frac{1}{1-\lambda} dP_T - \frac{\lambda}{1-\lambda} dP_1(x|\theta)\right)} = 1,\; \int{g_i(x)\left(\frac{1}{1-\lambda} dP_T - \frac{\lambda}{1-\lambda} dP_1(x|\theta)\right)} = m_i(\alpha),
\label{eqn:NlnSysMalpha}
\end{equation}
for all $i=1,\cdots,\ell$. This is a non linear system of equations with $\ell+1$ equations. The first one is verified for any couple $(\lambda,\theta)$ since both $P(.|\phi^*)$ and $P_1$ are probability measures. This reduces the system to $\ell$ nonlinear equations.\\
Now, let $P_0$ and $\tilde{P}_0$ be two elements in $\mathcal{N}\cap\mathcal{M}$, then there exist two couples $(\lambda,\theta)$ and $(\tilde{\lambda},\tilde{\theta})$ with $\lambda\neq\tilde{\lambda}$ or $\theta\neq\tilde{\theta}$. Since $P_0\in\mathcal{M}$, there exists $\alpha$ such that $P_0\in\mathcal{M}_{\alpha}$. Similarly, there exists $\tilde{\alpha}$ possibly different from $\alpha$. Now, $(\lambda,\theta,\alpha)$ and $(\tilde{\lambda},\tilde{\theta},\tilde{\alpha})$ are two solutions to the system of equations (\ref{eqn:NlnSysMalpha}) which contradicts with assumption 1 of the present proposition.\\
We may now conclude that, if a signed measure $P_0$ belongs to the intersection $\mathcal{N} \cap \mathcal{M}$, then it has the representation (\ref{eqn:SetNelement}) for a unique couple $(\lambda,\theta)$ and there exists a unique $\alpha$ such that the triplet $(\lambda,\theta,\alpha)$ is a solution to the non linear system (\ref{eqn:NlnSysMalpha}). Conversely, if there exists a triplet $(\lambda,\theta,\alpha)$ which solves the non linear system (\ref{eqn:NlnSysMalpha}), then the signed measure $P_0$ defined by $P_0 = \frac{1}{1-\lambda} P(.|\phi^*) - \frac{\lambda}{1-\lambda} P_1(.|\theta)$ belongs to the intersection $\mathcal{N} \cap \mathcal{M}$. This is because on the one hand, it clearly belongs to $\mathcal{N}$ by its definition and on the other hand, it belongs to $\mathcal{M}_{\alpha}$ since it verifies the constraints and thus belongs to $\mathcal{M}$.\\
It is now reasonable to conclude that under assumptions 2-4, the intersection $\mathcal{N} \cap \mathcal{M}$ includes a \emph{unique} signed measure $P_0$ if and only if the set of $\ell$ non linear equations (\ref{eqn:NlnSysMalpha}) has a unique solution $(\lambda,\theta,\alpha)$.
\end{proof}

% -------------------------------------------------
%%%%%%%%%%%%%%%%%%%%%%%%%%%%%%%%%%%%%%%%%%%%%%%%%%%%%%%%%%%%%%%%%%%%%%%
% -------------------------------------------------
\subsection{Proof of Lemma \ref{lem:SupXiPhiDiff}}\label{AppendSemiPara:Lem1}
\begin{proof}
The proof is based partially on the proof of Proposition 3.7 part (ii) in \cite{KeziouThesis}.\\
We proceed by contradiction. Let $\varepsilon>0$ be such that $\sup_{\phi}\|\xi_n(\phi) - \xi(\phi)\|>\varepsilon$. Then, there exists a sequence $a_k\in\Phi$ such that $\|\xi_n(a_k) - \xi(a_k)\|>\varepsilon$. By assumption A3, there exists $\eta>0$ such that:
\[H(a_k,\xi(a_k)) - H(a_k,\xi_n(a_k))>\eta.\]
Thus,
\begin{equation}
\mathbb{P}\left(\sup_{\phi}\|\xi_n(\phi) - \xi(\phi)\|>\varepsilon\right) \leq \mathbb{P}\left(H(a_k,\xi(a_k)) - H(a_k,\xi_n(a_k))>\eta\right).
\label{eqn:ProofPart1InteriorConsis}
\end{equation}
Let's prove that the right hand side tends to zero as $n$ goes to infinity which is sufficient to accomplish our claim.\\
By definition of $\xi_n(a_k)$ and assumption A2, we can write:
\begin{eqnarray*}
H_n(a_k,\xi_n(a_k)) & \geq & H_n(a_k,\xi(a_k)) \\
	& \geq & H(a_k,\xi(a_k)) - o_P(1)
\end{eqnarray*}
where $o_P(1)$ does not depend upon $a_k$ by virtue of A2. Now we have:
\begin{eqnarray*}
H(a_k,,\xi(a_k)) - H(a_k,\xi_n(a_k)) & \leq & H_n(a_k,\xi_n(a_k)) - H(a_k,,\xi_n(a_k)) + o_P(1) \\
 & \leq & \sup_{\xi,\phi} \left|H_n(\phi,\xi) - H(\phi,\xi)\right| + o_P(1).
\end{eqnarray*}
Last but not least, assumption A2 permits to conclude that the right hand side tends to zero in probability. Since the left hand side is already nonnegative by definition of $\xi(a_k)$, then by the previous result we conclude that $H(a_k,,\xi(a_k)) - H(a_k,\xi_n(a_k))$ tends to zero in probability. Employing this final result in inequality (\ref{eqn:ProofPart1InteriorConsis}), we get that $\sup_{\phi}\|\xi_n(\phi) - \xi(\phi)\|$ tends to zero in probability.
\end{proof}

% -------------------------------------------------
%%%%%%%%%%%%%%%%%%%%%%%%%%%%%%%%%%%%%%%%%%%%%%%%%%%%%%%%%%%%%%%%%%%%%%%
% -------------------------------------------------
\subsection{Proof of Theorem \ref{theo:MainTheorem}}\label{AppendSemiPara:Theo1}
\begin{proof}
We proceed by contradiction in a similar way to the proof of Lemma \ref{lem:SupXiPhiDiff}. Let $\kappa>0$ be such that $\|\phi^*-\hat{\phi}\|>\kappa$, then by assumption A4, there exists $\eta>0$ such that :
\[H(\hat{\phi},\xi(\hat{\phi})) - H(\phi^*,\xi(\phi^*)) > \eta.\]
This can be rewritten as:
\begin{equation}
\mathbb{P}\left(\|\phi^*-\hat{\phi}\|>\kappa\right) \leq \mathbb{P}\left(H(\hat{\phi},\xi(\hat{\phi})) - H(\phi^*,\xi(\phi^*)) > \eta\right).
\label{eqn:ProofPart2ExtConsis}
\end{equation}
We now demonstrate that the right hand side tends to zero as $n$ goes to infinity. Let $\varepsilon>0$ be such that for $n$ sufficiently large, we have $\sup_{\xi,\phi} \left|H(\phi,\xi)-H_n(\phi,\xi)\right|<\varepsilon$. This is possible by virtue of assumption A2. The definition of $\hat{\phi}$ together with assumption A2 will now imply:
\begin{eqnarray}
H_n(\hat{\phi},\xi_n(\hat{\phi})) & \leq & H_n(\phi^*,\xi_n(\phi^*)) \nonumber\\
  & \leq & H(\phi^*,\xi_n(\phi^*)) + \sup_{\xi,\phi} \left|H(\phi,\xi)-H_n(\phi,\xi)\right| \nonumber\\
	& \leq & H(\phi^*,\xi_n(\phi^*)) + \varepsilon.
	\label{eqn:ProofPart2ExtConsisIneq}
\end{eqnarray}
We use now the continuity assumption A5 of function $\xi\mapsto H(\phi^*,\xi)$ at $\xi(\phi^*)$. For the $\varepsilon$ chosen earlier, there exists $\delta(\phi^*,\varepsilon)$ such that if $\|\xi(\phi^*)-\xi_n(\phi^*)\|<\delta(\phi^*,\varepsilon)$, then:
\[|H(\phi^*,\xi_n(\phi^*)) - H(\phi^*,\xi(\phi^*))|<\varepsilon.\]
This is possible for sufficiently large $n$ since $\sup_{\phi}\|\xi(\phi^*)-\xi_n(\phi^*)\|$ tends to zero in probability by Lemma \ref{lem:SupXiPhiDiff}. Inserting this result in (\ref{eqn:ProofPart2ExtConsisIneq}) gives:
\[H_n(\hat{\phi},\xi_n(\hat{\phi})) \leq H(\phi^*,\xi(\phi^*)) + 2\varepsilon.\]
We now have:
\begin{eqnarray*}
H(\hat{\phi},\xi(\hat{\phi})) - H(\phi^*,\xi(\phi^*)) & \leq & H(\hat{\phi},\xi(\hat{\phi})) - H_n(\hat{\phi},\xi_n(\hat{\phi})) + 2\varepsilon \\
 & \leq & H(\hat{\phi},\xi(\hat{\phi})) - H(\hat{\phi},\xi_n(\hat{\phi})) + H(\hat{\phi},\xi_n(\hat{\phi})) - H_n(\hat{\phi},\xi_n(\hat{\phi})) + 2\varepsilon.
\end{eqnarray*}
Continuity assumption of $H$ implies that for $\varepsilon>0$, there exists $\delta(\hat{\phi},\varepsilon)>0$ such that if $\|\xi(\hat{\phi}) - \xi_n(\hat{\phi})\|<\delta(\hat{\phi},\varepsilon)$, then:
\[\left|H(\hat{\phi},\xi(\hat{\phi})) - H(\hat{\phi},\xi_n(\hat{\phi}))\right| \leq \varepsilon.\]
This is again possible for sufficiently large $n$ since $\sup_{\phi}\|\xi(\phi^*)-\xi_n(\phi^*)\|$ tends to zero in probability by Lemma \ref{lem:SupXiPhiDiff}. This entails that:
\begin{eqnarray*}
H(\hat{\phi},\xi(\hat{\phi})) - H(\phi^*,\xi(\phi^*)) & \leq & H(\hat{\phi},\xi_n(\hat{\phi})) - H_n(\hat{\phi},\xi_n(\hat{\phi})) + 3\varepsilon \\
  & \leq & \sup_{\xi,\phi} |H(\phi,\xi) - H_n(\phi,\xi)| + 3\varepsilon \\
	& \leq & 4\varepsilon
\end{eqnarray*}
We conclude that the right hand side in (\ref{eqn:ProofPart2ExtConsis}) goes to zero and the proof is completed.
\end{proof}

% -------------------------------------------------
%%%%%%%%%%%%%%%%%%%%%%%%%%%%%%%%%%%%%%%%%%%%%%%%%%%%%%%%%%%%%%%%%%%%%%%
% -------------------------------------------------
\subsection{Proof of Theorem \ref{theo:MainTheoremMomConstr}}\label{AppendSemiPara:Theo2}
\begin{proof}
We will use Theorem \ref{theo:MainTheorem}. We need to verify assumptions A2 and A3. Since the class of functions $\{(\phi,\xi)\mapsto h(\phi,\xi,.)\}$ is a Glivenko-Cantelli class of functions, then assumption A2 is fulfilled by the Glivenko-Cantelli theorem. Finally, assumption A3 can be checked by strict concavity of function $\xi\mapsto H(\phi,\xi)$. Indeed, for any $\eta\in(0,1)$ and any $\xi_1,\xi_2$, we have by strict convexity of $\psi$ :
\[\psi\left(\eta\xi_1^tg(x)+(1-\eta)\xi_2^tg(x)\right)<\eta\psi\left(\xi_1^tg(x)\right)+(1-\eta)\psi\left(\xi_2^tg(x)\right).\]
If the measure $dP/(1-\lambda) - \lambda dP_1(.|\theta)/(1-\lambda)$ is positive\footnote{This measure can never be zero since it integrates to one, thus we do not need to suppose that it is nonnegative.}, we may write:
\begin{multline*}
\int{\psi\left(\eta\xi_1^tg(x)+(1-\eta)\xi_2^tg(x)\right)\left(\frac{1}{1-\lambda}dP(x)-\frac{\lambda}{1-\lambda}dP_1(x|\theta)\right)}< \\ \eta\int{\psi\left(\xi_1^tg(x)\right)\left(\frac{1}{1-\lambda}dP(x)-\frac{\lambda}{1-\lambda}dP_1(x|\theta)\right)} + (1-\eta)\int{\psi\left(\xi_2^tg(x)\right)\left(\frac{1}{1-\lambda}dP(x)-\frac{\lambda}{1-\lambda}dP_1(x|\theta)\right)},
\end{multline*}
which entails that
\[H(\phi,\eta\xi_1+(1-\eta)\xi_2)> \eta H(\phi,\xi_1)+(1-\eta)H(\phi,\xi_2),\]
and function $\xi\mapsto H(\phi,\xi)$ becomes strictly concave. However, the measure $dP/(1-\lambda) - \lambda dP_1(.|\theta)/(1-\lambda)$ is in general a signed measure and the previous implication does not hold. This is not dramatic because function $\xi\mapsto H(\phi,\xi)$ has only two choices; it is either strictly convex or strictly concave. In case function $\xi\mapsto H(\phi,\xi)$ is strictly convex, then its supremum is infinity and the corresponding vector $\phi$ does not count in the calculus of the infimum after all. This means that the only vectors $\phi\in\Phi$ which interest us are those for which function $\xi\mapsto H(\phi,\xi)$ is strictly concave. In other words, the infimum in (\ref{eqn:MomentEstimProc}) can be calculated over the set:
\[\Phi^+ = \Phi\cap \left\{\phi: \quad \xi\mapsto H(\phi,\xi) \text{ is strictly concave}\right\}\]
instead of over $\Phi$. All assumptions of Theorem \ref{theo:MainTheorem} are now fulfilled and $\hat{\phi}$ converges in probability to $\phi^*$.
\end{proof} 

% -------------------------------------------------
%%%%%%%%%%%%%%%%%%%%%%%%%%%%%%%%%%%%%%%%%%%%%%%%%%%%%%%%%%%%%%%%%%%%%%%
% -------------------------------------------------
\subsection{Proof of Proposition \ref{prop:ContinDiffxiMom}}\label{AppendSemiPara:Prop2}
\begin{proof}
We already have:
\[\frac{1}{1-\lambda^*}P_T - \frac{\lambda^*}{1-\lambda^*}P_1(.|\theta^*)=P_0^*,\]
and since $P_0^*$ is supposed to be a probability measure, the matrix $J_{H(\phi^*,.)}$ is definite negative. Thus $\phi^*\in\Phi^+$. Since the set of negative definite matrices is an open set (see for example page 36 in \cite{OptimKenneth}), there exists a ball $\mathcal{U}$ of negative definite matrices centered at $J_{H(\phi^*,.)}$. Continuity of $\phi\mapsto J_{H(\phi,.)}$ permits\footnote{To see this, consider Sylvester's rule which is based on a test using the determinant of the sub-matrices of $J_{H}$. Each determinant needs to be negative. The continuity of the determinant function together with the continuity of $\phi\mapsto J_{H(\phi,.)}$ will imply that we may move around $J_(H(\phi^*,.))$ in a small neighborhood in a way that the determinants of the sub-matrices stay negative.} to find a ball $B(\phi^*,\tilde{r})$ such that the subset $\{J_{H(\phi,.)}: \phi\in B(\phi^*,\tilde{r})\}$ is inside $\mathcal{U}$. Now the neighborhood we are looking at is the ball $B(\phi^*,\tilde{r})$.\\
For the second part of the proposition, the existence and finiteness of $\xi(\phi)$ for $\phi\in\mathcal{V}=B(\phi^*,\tilde{r})$ is immediate since function $\xi\mapsto H(\phi,\xi)$ is strictly concave. Besides the the differentiability of the function $\phi\mapsto\xi(\phi)$ is a direct result of the implicit function theorem applied on the equation $\xi\mapsto \nabla H(\phi,.)$. Notice that the Hessian matrix of $H(\phi,.)$ is invertible since it is symmetric definite negative.
\end{proof}

% -------------------------------------------------
%%%%%%%%%%%%%%%%%%%%%%%%%%%%%%%%%%%%%%%%%%%%%%%%%%%%%%%%%%%%%%%%%%%%%%%
% -------------------------------------------------
\subsection{Proof of Theorem \ref{theo:AsymptotNormalMomConstr}}\label{AppendSemiPara:Theo3}
\begin{proof}
We follow the steps of Theorem 3.2 in \cite{NeweySmith}. The idea behind the proof is a mean value expansion with Lagrange remainder of the estimating equations. \\
We need at first to verify if $\hat{\phi}$ belongs to the interior of $\Phi^+$ in order to be able to differentiate $\phi\mapsto H_n(\phi,\xi)$. This can be done similarly to Proposition \ref{prop:ContinDiffxiMom}. We also can prove (by replacing $H$ by $H_n$ and $\xi(\phi)$ by $\xi_n(\phi)$) that $\phi\mapsto\xi_n(\phi)$ is continuously differentiable in a neighborhood of $\phi^*$.\\
We may now proceed to the mean value expansion. By the very definition of $\xi_n(\phi)$, we have:
\[\frac{\partial H_n}{\partial \xi}(\phi,\xi_n(\phi)) = 0\qquad \forall \phi\in\text{int}(\Phi^+),\]
which also holds for $\phi=\hat{\phi}$, i.e. 
\[\frac{\partial H_n}{\partial \xi}(\hat{\phi},\xi_n(\hat{\phi})) = 0.\]
On the other hand, the definition of $\hat{\phi}$ implies that:
\[\left.\frac{\partial}{\partial \phi}H_n(\phi,\xi_n(\phi))\right|_{\phi=\hat{\phi}} = 0.\]
Since function $\phi\mapsto\xi_n(\phi)$ is continuously differentiable. A simple chain rule implies
\begin{eqnarray*}
\left.\frac{\partial}{\partial \phi}\left(H_n(\phi,\xi_n(\phi))\right)\right|_{\phi=\hat{\phi}} & = &  \frac{\partial}{\partial \phi}H_n(\hat{\phi},\xi_n(\hat{\phi})) + \frac{\partial}{\partial \xi} H_n(\hat{\phi},\xi_n(\hat{\phi})) \frac{\partial \xi_n}{\partial \phi}(\hat{\phi}) \\
 & = & \frac{\partial}{\partial \phi}H_n(\hat{\phi},\xi_n(\hat{\phi})).
\end{eqnarray*}
The second line comes from the definition of $\xi_n(\phi)$ as the argument of the supremum of function $\xi\mapsto H_n(\phi,\xi)$. Now, the estimating equations are given simply by
\begin{eqnarray*}
\frac{\partial H_n}{\partial \xi}(\hat{\phi},\xi_n(\hat{\phi})) & = & 0; \\
\frac{\partial H_n}{\partial \phi}(\hat{\phi},\xi_n(\hat{\phi})) & = & 0.
\end{eqnarray*}
We need to calculate these partial derivatives. We start by the derivative with respect to $\xi$:
\begin{equation}
\frac{\partial H_n}{\partial \xi}(\phi,\xi) = m(\alpha) - \frac{1}{1-\lambda}\frac{1}{n}\sum_{i=1}^n{\psi'\left(\xi^tg(x_i)\right)g(x_i)} + \frac{\lambda}{1-\lambda}\int{\psi'\left(\xi^tg(x)\right)g(x)p_1(x|\theta)dx}
\label{eqn:DerivWRTxiMom}
\end{equation}
We calculate the partial derivatives with respect to $\alpha,\lambda$ and $\theta$:
\begin{eqnarray}
\frac{\partial H_n}{\partial \alpha} & = &  \xi^t \nabla m(\alpha) \label{eqn:DerivWRTalphaMom}\\
\frac{\partial H_n}{\partial \lambda} & = &  -\frac{1}{(1-\lambda)^2} \frac{1}{n}\sum_{i=1}^n{\psi\left(\xi^tg(x_i)\right)} + \frac{1}{(1-\lambda)^2}\int{\psi\left(\xi^tg(x)\right)p_1(x|\theta)dx} \label{eqn:DerivWRTlambdaMom}\\
\frac{\partial H_n}{\partial \theta} & = &  \frac{\lambda}{1-\lambda}\int{\psi\left(\xi^tg(x)\right)\nabla_{\theta} p_1(x|\theta)dx} \label{eqn:DerivWRTthetaMom}
\end{eqnarray}
Notice that by Lemma \ref{lem:SupXiPhiDiff}, the continuity of $\phi\mapsto\xi(\phi)$ and the consistency of $\hat{\phi}$ towards $\phi^*$, we have $\xi_n(\hat{\phi})\rightarrow \xi(\phi^*)=0$ in probability. A mean value expansion of the estimating equation between $(\hat{\phi},\xi_n(\hat{\phi}))$ and $(\phi^*,0)$ implies that there exists $(\bar{\phi},\bar{\xi})$ on the line between these two points such that:
\begin{equation}
\left(\begin{array}{c} \frac{\partial H_n}{\partial \phi}(\hat{\phi},\xi_n(\hat{\phi})) \\ \frac{\partial H_n}{\partial \xi}(\hat{\phi},\xi_n(\hat{\phi})) \end{array}\right) = \left(\begin{array}{c}  \frac{\partial H_n}{\partial \phi}(\phi^*,0) \\\frac{\partial H_n}{\partial \xi}(\phi^*,0) \end{array}\right)  + J_{H_n}(\bar{\phi},\bar{\xi}) \left(\begin{array}{c} \hat{\phi}-\phi^* \\ \xi_n(\hat{\phi})\end{array}\right),
\label{eqn:StochExpansion}
\end{equation}
where $J_{H_n}(\bar{\phi},\bar{\xi})$ is the matrix of second derivatives of $H_n$ calculated at the mid point $(\bar{\phi},\bar{\xi})$. The left hand side is zero, so we need to calculate the first vector in the right hand side. We have by simple substitution in formula (\ref{eqn:DerivWRTxiMom}):
\[\frac{\partial H_n}{\partial \xi}(\phi^*,0) = m(\alpha^*) - \frac{1}{1-\lambda^*}\frac{1}{n}\sum_{i=1}^n{g(x_i)} + \frac{\lambda^*}{1-\lambda^*}\int{g(x)p_1(x|\theta^*)dx}.\]
Using the assumption that the model (\ref{eqn:TrueP0model}) verify the set of constraints defining $\mathcal{M}_{\alpha}$ together with the CLT, we write:
\begin{equation}
\sqrt{n}\frac{\partial H_n}{\partial \xi}(\phi^*,0) \xrightarrow[\mathcal{L}]{} \mathcal{N}\left(0,\frac{1}{(1-\lambda^*)^2}\text{Var}_{P_T}(g(X))\right).
\label{eqn:LimitLawPartialDerivHn}
\end{equation}
Using formulas (\ref{eqn:DerivWRTalphaMom}), (\ref{eqn:DerivWRTlambdaMom}) and (\ref{eqn:DerivWRTthetaMom}), we may write:
\begin{eqnarray*}
\frac{\partial H_n}{\partial \alpha}(\phi^*,0) & = & 0; \\
\frac{\partial H_n}{\partial \lambda}(\phi^*,0) & = & -\frac{1}{(1-\lambda^*)^2} + \frac{1}{(1-\lambda^*)^2} = 0;\\
\frac{\partial H_n}{\partial \theta}(\phi^*,0) & = & \frac{\lambda^*}{1-\lambda^*}\int{\nabla_{\theta} p_1(x|\theta^*)dx} = \frac{\lambda^*}{1-\lambda^*}\nabla_{\theta} \int{p_1(x|\theta^*)dx} = 0.
\end{eqnarray*}
The final line holds since by Lebesgue's differentiability theorem using assumption 5 for $\xi=0$, we can change between the sign of integration and derivation. Combine this with the fact that $p_1(x|\theta^*)$ is a probability density function which integrates to 1, gives the result in the last line.\\
We need now to write an explicit form for the matrix $J_{H_n}(\bar{\phi},\bar{\xi})$ and study its limit in probability. It contains the second order partial derivatives of function $H_n$ with respect to its parameters. We start by the double derivatives. Using formulas (\ref{eqn:DerivWRTxiMom}), (\ref{eqn:DerivWRTalphaMom}), (\ref{eqn:DerivWRTlambdaMom}) and (\ref{eqn:DerivWRTthetaMom}), we write:
\begin{eqnarray*}
\frac{\partial^2H_n}{\partial \xi^2} & = & - \frac{1}{1-\lambda}\frac{1}{n}\sum_{i=1}^n{\psi''\left(\xi^tg(x_i)\right)g(x_i)g(x_i)^t} + \frac{\lambda}{1-\lambda}\int{\psi''\left(\xi^tg(x)\right)g(x)g(x)^tp_1(x|\theta)dx};\\
\frac{\partial^2H_n}{\partial \alpha^2} & = & \xi^t J_m(\alpha);\\
\frac{\partial^2H_n}{\partial \lambda^2} & = & -\frac{2}{(1-\lambda)^3} \frac{1}{n}\sum_{i=1}^n{\psi\left(\xi^tg(x_i)\right)} + \frac{2}{(1-\lambda)^3}\int{\psi\left(\xi^tg(x)\right)p_1(x|\theta)dx};\\
\frac{\partial^2H_n}{\partial \theta^2} & = & \frac{\lambda}{1-\lambda}\int{\psi\left(\xi^tg(x)\right)J_{p_1(x|\theta)}dx}; \\
\frac{\partial^2H_n}{\partial \xi\partial\alpha} & = & \nabla m(\alpha); \\
\frac{\partial^2H_n}{\partial \xi\partial\lambda} & = &  - \frac{1}{(1-\lambda)^2}\frac{1}{n}\sum_{i=1}^n{\psi'\left(\xi^tg(x_i)\right)g(x_i)} + \frac{1}{(1-\lambda)^2}\int{\psi'\left(\xi^tg(x)\right)g(x)p_1(x|\theta)dx};\\
\frac{\partial^2H_n}{\partial \xi\partial\theta} & = & \frac{\lambda}{1-\lambda}\int{\psi'\left(\xi^tg(x)\right)g(x)\nabla_{\theta}p_1(x|\theta)^tdx}; \\
\frac{\partial^2H_n}{\partial \alpha\partial\lambda} & = & 0; \\
\frac{\partial^2H_n}{\partial \alpha\partial\theta} & = & 0; \\
\frac{\partial^2H_n}{\partial \lambda\partial\theta} & = &  \frac{1}{(1-\lambda)^2}\int{\psi\left(\xi^tg(x)\right)\nabla_{\theta}p_1(x|\theta)dx}.
\end{eqnarray*}
As $n$ goes to infinity, we have $\bar{\xi}\rightarrow 0$ and $\bar{\phi}\rightarrow \phi^*$. Then, under regularity assumptions of the present theorem, we can calculate the limit in probability of the matrix $J_{H_n}(\bar{\phi},\bar{\xi})$. The blocks limits are given by
\[
\frac{\partial^2H_n}{\partial \xi^2} \stackrel{\mathbb{P}}{\rightarrow} - \mathbb{E}_{P_0^*}\left[g(X)g(X)^t\right],\qquad \frac{\partial^2H_n}{\partial \alpha^2} \stackrel{\mathbb{P}}{\rightarrow} 0, \qquad \frac{\partial^2H_n}{\partial \lambda^2} \stackrel{\mathbb{P}}{\rightarrow} 0, \qquad \frac{\partial^2H_n}{\partial \theta^2} \stackrel{\mathbb{P}}{\rightarrow} 0,\qquad \frac{\partial^2H_n}{\partial \xi\partial\alpha} \stackrel{\mathbb{P}}{\rightarrow} \nabla m(\alpha^*) \]

\[\frac{\partial^2H_n}{\partial \xi\partial\lambda} \stackrel{\mathbb{P}}{\rightarrow}  - \frac{1}{(1-\lambda^*)^2}\mathbb{E}_{P_T}\left[g(X)\right] + \frac{1}{(1-\lambda^*)^2}\int{g(x)p_1(x|\theta^*)dx}\]

\[\frac{\partial^2H_n}{\partial \xi\partial\theta} \stackrel{\mathbb{P}}{\rightarrow} \frac{\lambda^*}{1-\lambda^*}\int{g(x)\nabla_{\theta}p_1(x|\theta^*)dx},\qquad \frac{\partial^2H_n}{\partial \alpha\partial\lambda} \stackrel{\mathbb{P}}{\rightarrow} 0, \qquad
\frac{\partial^2H_n}{\partial \alpha\partial\theta}  \stackrel{\mathbb{P}}{\rightarrow} 0, \qquad \frac{\partial^2H_n}{\partial \lambda\partial\theta} \stackrel{\mathbb{P}}{\rightarrow}  0, \]
taking into account that $\psi(0)=0,\psi'(0)=1$ and $\psi''(0)=1$. The limit in probability of the matrix $J_{H_n}(\bar{\phi},\bar{\xi})$ can be written in the form:
\[ J_H = \left[\begin{array}{cc}
0 & J_{\phi^*,\xi^*}^t \\
J_{\phi^*,\xi^*} & J_{\xi^*,\xi^*}
\end{array}\right],\]
where $J_{\phi^*,\xi^*}$ and $J_{\xi^*,\xi^*}$ are given by (\ref{eqn:NormalAsymMomJ1}) and (\ref{eqn:NormalAsymMomJ2}). The inverse of matrix $J_H$ has the form:
\[J_H^{-1} = \left(\begin{array}{cc} -\Sigma & H \\ H^t & W\end{array}\right),\]
where
\[
\Sigma = \left(J_{\phi^*,\xi^*}^t J_{\xi^*,\xi^*} J_{\phi^*,\xi^*}\right)^{-1},\quad  H = \Sigma J_{\phi^*,\xi^*}^t J_{\xi^*,\xi^*}^{-1},\quad  W = J_{\xi^*,\xi^*}^{-1} - J_{\xi^*,\xi^*}^{-1} J_{\phi^*,\xi^*} \Sigma J_{\phi^*,\xi^*}^t J_{\xi^*,\xi^*}^{-1}.
\]
Going back to (\ref{eqn:StochExpansion}), we have:
\begin{equation*}
\left(\begin{array}{c} 0 \\ 0 \end{array}\right) = \left(\begin{array}{c}  0 \\ \frac{\partial H_n}{\partial \xi}(\phi^*,0) \end{array}\right)  + J_{H_n}(\bar{\phi},\bar{\xi}) \left(\begin{array}{c}  \hat{\phi}-\phi^* \\ \xi_n(\hat{\phi}) \end{array}\right).
\end{equation*}
Solving this equation in $\phi$ and $\xi$ gives:
\begin{equation*}
\left(\begin{array}{c}  \sqrt{n}\left(\hat{\phi}-\phi^*\right) \\ \sqrt{n}\xi_n(\hat{\phi})\end{array}\right) = J_H^{-1}\left(\begin{array}{c}  0 \\ \sqrt{n}\frac{\partial H_n}{\partial \xi}(\phi^*,0) \end{array}\right) + o_P(1).
\end{equation*}
Finally, using (\ref{eqn:LimitLawPartialDerivHn}), we get that:
\[\left(\begin{array}{c}  \sqrt{n}\left(\hat{\phi}-\phi^*\right) \\ \sqrt{n}\xi_n(\hat{\phi})\end{array}\right) \xrightarrow[\mathcal{L}]{} \mathcal{N}\left(0,S\right)\]
where 
\[S = \frac{1}{(1-\lambda^*)^2}\left(\begin{array}{c}H \\ W\end{array}\right) \text{Var}_{P_T}(g(X)) \left(H^t\quad W^t\right).\]
This ends the proof.
\end{proof}

\bibliographystyle{IEEEtran}
\bibliography{IEEEabrv,bibliography-paper}

\end{document}